\documentclass[11pt]{article}
			
\usepackage{fullpage} 
\usepackage{algorithm,algorithmic}
\usepackage{amsmath,amssymb,amsthm,amsfonts}
\usepackage{booktabs}   
\usepackage{xspace}
\usepackage{hyperref,url}
\usepackage{graphicx,subfigure}
\usepackage{mdframed}
\usepackage{thm-restate}
\definecolor{light-gray}{gray}{0.95}
\usepackage{mathtools}
\usepackage{tabu,multirow}
\usepackage{dblfloatfix}
 \usepackage{epsfig}
\usepackage{xcolor,color,url,eurosym,bbm}
\usepackage{array,multirow,tikz,wrapfig}
\usetikzlibrary{shapes.geometric}
\usepackage{empheq}
\usepackage{adjustbox}
\usepackage[graphicx]{realboxes}

\usepackage[most]{tcolorbox}
\newtcbox{\mymath}[1][]{%
    nobeforeafter, math upper, tcbox raise base,
    enhanced, colframe=blue!30!black,
    colback=blue!30, boxrule=1pt,
    #1}

\tikzset{base/.style={draw, align=center, minimum height=4ex},
         test1/.style={base, diamond, aspect=2, text width=5em, inner sep=5pt},
         test2/.style={base, diamond, aspect=2, text width=5em, inner sep=-4.8pt}
        }

 \newcommand{\specialcell}[2][c]{%
  \begin{tabular}[#1]{@{}c@{}}#2\end{tabular}}

\newcommand{\xuv}{\chi_{u,v}}
\newtheorem{problem}{Problem}
\newcommand\myeq{\mathrel{\stackrel{\makebox[0pt]{\mbox{\normalfont\tiny def}}}{=}}}
\newtheorem{proposition}{Proposition}
\newtheorem{theorem}{Theorem}
\newcommand{\diag}{\mathop\mathrm{diag}}

\DeclareMathOperator{\tr}{Tr}

\DeclareMathOperator{\vol}{vol}

 \makeatletter
\newcommand*{\rom}[1]{\expandafter\@slowromancap\romannumeral #1@}
\makeatother

\newcommand{\hide}[1]{} 

\newtheorem{definition}{Definition}

\newcommand{\eqdef}{\mathbin{\stackrel{\rm def}{=}}}

\newcommand{\rg}{r_G}
\newcommand{\pp}{p^{\alpha}}
\newcommand{\pone}{\hyperref[fullProb]{1.1}}
\newcommand{\ptwo}{\hyperref[subProb]{1.2}}
\newcommand{\pthree}{\hyperref[noiseProb]{1.3}}

\newcommand{\wh}{\widehat}

\newcommand{\ol}{\overline}

\newcommand{\optG}{H}

\newcommand{\beql}[1]{\begin{equation}\label{#1}}

\newcommand{\beq}[1]{\begin{equation}\label{#1}}
\newcommand{\eeq}{\end{equation}}

\newcommand{\field}[1]{\mathbb{#1}} % requires amsfonts

\newcommand{\spara}[1]{\smallskip\noindent{\bf #1}}

\newcommand{\squishlist}{
 \begin{list}{$\bullet$}
  {  \setlength{\itemsep}{0pt}
     \setlength{\parsep}{3pt}
     \setlength{\topsep}{3pt}
     \setlength{\partopsep}{0pt}
     \setlength{\leftmargin}{2em}
     \setlength{\labelwidth}{1.5em}
     \setlength{\labelsep}{0.5em}
} }
\newcommand{\squishlisttight}{
 \begin{list}{$\bullet$}
  { \setlength{\itemsep}{0pt}
    \setlength{\parsep}{0pt}
    \setlength{\topsep}{0pt}
    \setlength{\partopsep}{0pt}
    \setlength{\leftmargin}{2em}
    \setlength{\labelwidth}{1.5em}
    \setlength{\labelsep}{0.5em}
} }

\newcommand{\squishdesc}{
 \begin{list}{}
  {  \setlength{\itemsep}{0pt}
     \setlength{\parsep}{3pt}
     \setlength{\topsep}{3pt}
     \setlength{\partopsep}{0pt}
     \setlength{\leftmargin}{1em}
     \setlength{\labelwidth}{1.5em}
     \setlength{\labelsep}{0.5em}
} }

\newcommand{\squishend}{
  \end{list}
}

\newcommand{\squishlistt}{
 \begin{list}{---}
  {  \setlength{\itemsep}{0pt}
     \setlength{\parsep}{3pt}
     \setlength{\topsep}{3pt}
     \setlength{\partopsep}{0pt}
     \setlength{\leftmargin}{2em}
     \setlength{\labelwidth}{1.5em}
     \setlength{\labelsep}{0.5em}
} }

\begin{document}

\title{Learning Networks from Random Walk-Based Node Similarities}

\author{
Jeremy G. Hoskins\thanks{Yale University. \href{jeremy.hoskins@yale.edu}{jeremy.hoskins@yale.edu} }
\and 
Cameron Musco\thanks{Massachusetts Institute of Technology. \href{cnmusco@mit.edu}{cnmusco@mit.edu} }
\and 
Christopher Musco\thanks{Massachusetts Institute of Technology. \href{cpmusco@mit.edu}{cpmusco@mit.edu} }
\and
Charalampos E. Tsourakakis\thanks{Boston University \&  Harvard University. \href{mailto:ctsourak@bu.edu}{ctsourak@bu.edu} }
}

\maketitle 
 
\begin{abstract}
Digital presence in the world of online social media entails significant privacy risks \cite{backstrom2007wherefore,bhagat2009class,korolova2008link,zheleva2012privacy,korula2014efficient}. In this work we consider a privacy threat to a social network in which an attacker has access to a subset of \emph{random walk-based node similarities}, such as effective resistances (i.e., commute times) or personalized PageRank scores. Using these similarities, the attacker's goal is to infer as much information as possible about the underlying network, including any remaining unknown pairwise node similarities and edges.

For the effective resistance metric, we show that with just a small subset of measurements, the attacker can learn a large fraction of edges in a social network (and in some cases all edges), even when the measurements are noisy. We also show that it is possible to  learn a graph which accurately matches the underlying network on \emph{all other effective resistances.} This second observation is interesting from a data mining perspective, since it can be expensive to accurately compute all effective resistances or other random walk-based similarities. As an alternative, our graphs learned from just a subset of approximate effective resistances can be used as surrogates in a wide range of applications that use effective resistances to probe graph structure, including for graph clustering, node centrality evaluation, and anomaly detection. 

We obtain our results by formalizing the graph learning objective mathematically, using two optimization problems. One formulation is convex and can be solved provably in polynomial time. The other is not, but we solve it efficiently with projected gradient and coordinate descent. 
We demonstrate the effectiveness of these methods on a number of social networks obtained from Facebook.
We also discuss how our methods can be generalized to other random walk-based similarities, such as personalized PageRank scores. Our code is available at\linebreak \url{https://github.com/cnmusco/graph-similarity-learning}.

\end{abstract}

\newpage

\section{Introduction}
\label{sec:intro}
In graph mining and social network science, a variety of measures are used to quantify the similarity between nodes in a graph, including the shortest path distance, Jaccard's coefficient between node neighborhoods, the Adamic-Adar coefficient \cite{adamic2003friends}, and hub-authority-based metrics \cite{kleinberg1999web,blondel2004measure}.
An important family of similarity measures are based on random walks, including SimRank \cite{jeh2002simrank},  random walks with restarts \cite{tong2006fast},  commute times \cite{fouss2007random},  personalized PageRank  \cite{page1999pagerank,jeh2003scaling,bahmani2010fast}, and DeepWalk embeddings \cite{perozzi2014deepwalk}. These measures capture both local and global graph structure and hence are widely used in graph clustering and community detection \cite{andersen2006local,saerens2004principal}, anomaly detection \cite{rattigan2005case}, collaborative filtering \cite{fouss2007random,sarkar2012tractable,yen2007graph}, link prediction \cite{liben2007link}, and many other applications (including outside of network science, such as in computer vision \cite{grady2006random}).

In this work we focus on these random walk-based similarity metrics. We initiate the  study of a fundamental question:
\begin{quotation}
How much information about a network can be learned given access to \emph{a subset of potentially noisy estimates} of pairwise node similarities?
\end{quotation}

This question is important from a privacy perspective. 
A common privacy breach is {\em social link disclosure} \cite{zheleva2012privacy}, in which an attacker attempts to learn potentially sensitive links between nodes in a network. Such attacks are very common; fake accounts with engineered profiles are used to infiltrate and spy on social groups,  potential employers  may want to inspect the social network of a job candidate, and advertisers may wish to probe the demographic and interest information of a user to offer targeted ads. Thus, characterizing the ability of an attacker to reveal link information using pairwise node similarities is important in understanding the privacy implications of releasing such similarities, or information which can be used to compute them.

From a data mining perspective, computing all pairwise node similarities can be infeasible for large networks since the number of similarities grows quadratically  in the number of nodes. Additionally, when the network cannot be accessed in full but can only be probed via crawling with random walks (e.g., a by third party studying a social network \cite{katzir2011estimating}), we may only have access to estimates of pairwise similarities rather than their exact values. Thus, understanding what information can still be learned from a partial, potentially noisy, set of node similarities is important when using these metrics in large scale graph mining.

\vspace{-3mm}
\subsection{Learning from Effective Resistances}
\label{intro_learning}

In this paper, we focus on commute times, which are one of the most widely used random walk-based similarities.
Commute times are a scaled version of \emph{effective resistances}, they form a metric, and have major algorithmic applications, such as spectral graph sparsification \cite{spielman2011graph}.  
Our ideas can be extended to related similarity measures, such as personalized PageRank, which we discuss in Section \ref{sec:extensions}.
It was shown in the seminal work of Liben-Nowell and Kleinberg  \cite{liben2007link} that effective resistances can be used to predict a significant fraction of future links appearing in networks from existing links,  typically ranging  from 5\% up to 33\%.  

A difficulty associated with this task is that, in contrast 
to {\em local} similarity measures such as the number of common neighbors or the Adamic-Adar coefficient \cite{adamic2003friends}, node similarity under the effective resistance metric does not necessarily imply local connectivity. For example, two nodes connected by many long paths may be more similar than two nodes directly connected by a single edge.  

Furthermore, in certain cases, the effective resistance between two nodes $u,v$ tends to correlate well with a simple function of the degree sequence (specifically, $\frac{1}{d(u)}+\frac{1}{d(v)}$) \cite{luxburg2010getting,von2010hitting,von2014hitting}, and it is known that there  are many graphs with the same degree sequence but very different global structures.
Nevertheless, \emph{considered in aggregate}, effective resistances encode global structure in a very strong way. For any graph, given all pairs effective resistances, it is possible to provably recover the full graph in polynomial time \cite{spielmanLecture,wittmann}! This contrasts with purely local similarity metrics, which can be used heuristically for link prediction, but do not give network reconstruction in general. For instance,  all-pairwise counts of common neighbors in any triangle free graph equal $0$. Thus, these counts reveal no information about graph structure.

While the full information case is well understood, when all exact effective resistances are not available, little is known about what graph information can be learned. Some work considers reconstruction of trees based on a subset of effective resistances \cite{CULBERSON1989215,Batagelj90,stone2009}. However outside of this special case, essentially nothing is known.

\begin{figure*}[!t]
\centering
\includegraphics[width=.85\textwidth]{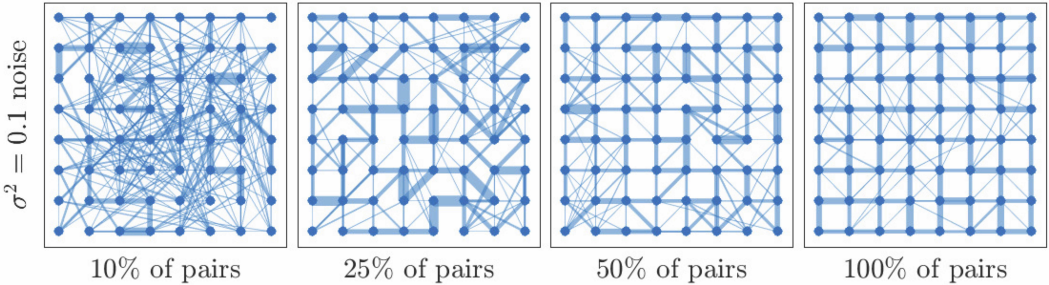}
\caption{\label{teaserimage}Grid graph learned from  small, randomly selected fractions of effective resistance pairs using our proposed method. Edge thickness is proportional to edge weight.
}
\end{figure*}

\subsection{Our Contributions}

We study in depth what can be learned about a graph given a subset of potentially noisy effective resistance estimates, from both a theoretical and empirical viewpoint. Our main contributions are:

\spara{Mathematical formulation.}  
We provide an optimization-based formulation of the problem of learning a graph from effective resistances. Specifically, given a set of effective resistance measurements, we consider the problem of finding a graph whose effective resistances \emph{match the given resistances as closely as possible.} 

In general, there may be many different graphs which match any subset of all pairs effective resistances, and hence many minimizers to our optimization problem. If the resistances additionally have some noise, there may be no graph which matches them exactly but many which match them approximately.
 Nevertheless, as we show empirically, the graph obtained via our optimization approach typically recovers significant information about the underlying graph, including a large fraction of its edges, its global structure, and good approximations to \emph{all} of its effective resistances. 

\spara{Algorithms.}  
We prove that, in some cases, the optimization problem we present can be solved exactly, in polynomial time. However, in general, the problem is non-convex and does not admit an obvious polynomial time solution. We give expressions for its gradient and Hessian, and show that it can be solved efficiently via iterative methods. In particular, we employ projected gradient and coordinate descent, as well as a powerful initialization strategy which allows us to find high quality solutions in most instances. 

We also show that the problem can be relaxed to a convex formulation. Instead of searching for a graph that matches all given effective resistance measurements, we just find a graph whose effective resistances are upper bounded by those given and which has minimum total edge weight. This modified problem is convex and can be solved in polynomial time via semidefinite programming.

\spara{Experimental Results.} 
We evaluate our algorithms on several synthetic graphs and real Facebook ego networks, which contain all nodes in the social circle of a user. We demonstrate that given a small randomly selected fraction of all effective resistance pairs ($10\%$-$25\%$) we can learn a large fraction of a network -- typically between $20\%$ and $60\%$ of edges. We show that this recovery  is robust to adding Gaussian noise to the given effective resistances. A visual sample of our results is included in Figure \ref{teaserimage} for a simple grid graph.  We observe that using a small fraction of noisy pairwise similarities we can extract a lot of structural information about the graph. Here, noise is distributed for each edge independently as a Gaussian random variable with zero mean, and variance $\sigma^2=0.1$.  We observe that as the size of the fraction of revealed similarities increases, an attacker can recover most of the graph. 

We also demonstrate that by finding a graph which closely matches the given set of effective resistances (via our optimization approach), we in fact find a graph which closely matches the underlying network on \emph{all} effective resistance pairs. This indicates that a significant amount of the information contained in all pairs effective resistances can be learned from just a small subset of these pairs, even when they are corrupted by noise.

%\spara{Roadmap.}   The paper is organized as follows: Section~\ref{sec:related} presents related work.
%Section~\ref{sec:proposed} presents our proposed method of learning a graph from effective resistances, via two optimization formulations. Section \ref{sec:analytical} presents our theoretical results 
%Section~\ref{sec:exp} presents experimental findings on synthetic and real-world data.  Finally, Section~\ref{sec:concl} concludes the paper. 

\section{Related Work}
\label{sec:related}
\spara{Link prediction and privacy in social networks.}  The link prediction problem was popularized by Liben-Nowell and Kleinberg \cite{liben2007link}. The goal of {\em link prediction} is to infer which edges are likely to appear in the near-future, given a snapshot of the network. In \cite{liben2007link} various node similarity measures are used to predict a non-trivial fraction of future interactions. Other works focus on predicting positive or negative links \cite{leskovec2010predicting,tsourakakis2017predicting}.  For an extensive survey, see \cite{al2011survey}. 

While closely related, link prediction differs from the problem we consider since, typically, one is given full access to a network snapshot (and in particular, could compute exact pairwise node similarities for this network) and seeks to make predictions about future evolutions of the network. In our setting, we are given partial information about a network (via a partial set of noisy similarity measures) and seek to learn existing links, 

While link prediction is useful in applications ranging from understanding network evolution, to link recommendation, and predicting interactions between terrorists, it entails privacy risks.  For instance, a malicious attacker can use link prediction to disclose information about sensitive connections in a sexual contact graph   \cite{korolova2008link,zheleva2012privacy}.   Private link prediction is not adequately explored in the literature. Abebe and Nakos suggest a possible formalization \cite{abebe2014private}.

\spara{Learning graphs.} Learning graphs from data is central to many disciplines including machine learning \cite{kalofolias2016learn}, network tomography \cite{castro2004network}, bioinformatics \cite{desper1999inferring}, and phylogenetics \cite{felsenstein1985confidence}. 
 The general framework is that there exists a hidden graph that we wish to discover by exploiting some kind of data, e.g., answers from a blackbox oracle returning certain (possibly noisy) graph measurements. %Alternatively, we may be given
 %a cloud of $n$ points in $d$ dimensions represented by matrix  $X \in \field{R}^{n \times d}$ and seek to learn a graph whose structure reflects structure in $X$.
 
Theoretical work in this area has focused on worst case {\em query} complexity. Two representative examples include Angluin's et al. work on learning graphs using edge detecting queries \cite{angluin2008learning}, and the recent work of Kannan, Mathieu, and Zhou using distance queries \cite{kannan2015near}. A number of works also consider learning graph parameters such as node count and mixing time by examining random walks traces \cite{katzir2011estimating,cooper2014estimating,ben2017estimating}. Experimental work has focused on recovering graphs from  noisy measurements such as GPS traces \cite{chen2010road}, and distances between cell populations based on genetic differences \cite{felsenstein1985confidence,desper1999inferring}.

\spara{Learning trees.} While a tree is special type of a graph, the case of learning trees deserves special mention since significant work on learning graphs from data has focused on trees.  Distance-based reconstruction of trees aims to reconstruct a phylogenetic tree whose leaves correspond to $n$ species, given their ${n \choose 2}$ pairwise distances \cite{felsenstein1985confidence}. Note that on a tree, pairwise distances are identical to pairwise effective resistanes. Batagelj et al. study tree realizability assuming access to fewer than ${n \choose 2}$ leaves' pairwise distances \cite{Batagelj90}.   A spectral method has been proposed by Stone and Griffing \cite{stone2009}.  Culberson and Rudnicki  \cite{CULBERSON1989215} consider the problem of reconstructing a degree restricted tree
given its distance matrix, see also \cite{reyzin2007longest}.

\section{Proposed Method}
\label{sec:proposed}
\subsection{Notation and Preliminaries}
For an undirected, weighted graph $G = (V,E,w)$ with $n$ nodes, we let $A$ be the $n\times n$ adjacency matrix. $L$ denotes the (unnormalized) graph Laplacian:
$L = D - A$, where $D$ is a diagonal matrix with $D_{i,i}$ equal to the weighted degree of node $i$. For an integer $n > 0$, $[n]$ denotes the set $\{1,2,...,n\}$. $e_i$ denotes the $i^{th}$ standard basis vector. For a matrix $M$, $M_{i,j}$ denotes the entry in its $i^{th}$ row and $j^{th}$ column.

\spara{Commute time and effective resistance.}
For two nodes $u,v \in V$, the hitting time $h_G(u,v)$ is the expected time it takes a random walk to travel from $u$ to $v$. The {\em commute time} is its symmetrized version $c_G(u,v) = h_G(u,v) + h_G(v,u)$, i.e., the time to move from $u$ to $v$ and then back to $u$. 
For connected graphs, the effective resistance between $u,v$ is a scaling of the commute time: $r_G(u,v) =\frac{c_G(u,v)}{\vol(G)}$ where $\vol(G) = 2 \sum_{e \in E} w_e$. Effective resistance has a natural electrical interpretation. When $G$ is viewed as an electrical network on $n$ nodes where each edge $e$ corresponds to a link of conductance $w_e$ (equivalently to a resistor of resistance $\frac{1}{w_e}$), the effective resistance is the voltage difference that appears across  $u,v$  when a unit current source is applied to them. Effective resistances (and hence commute times) always form a metric \cite{klein1993resistance}.

Letting $\xuv = e_u - e_v$, the effective resistance between nodes $u$ and $v$ in a graph $G$ with Laplacian matrix $L$ can be computed as:
\begin{align}\label{eq:res}
\rg(u,v) = \xuv^TL^+ \xuv.
\end{align}
Here $L^+$ denotes the Moore-Penrose pseudoinverse of $L$.

\subsection{Problem Definition} 

We begin by providing a mathematic formulation of the problem introduced in Section~\ref{sec:intro} -- that of learning the structure of a graph from partial and possibly noisy measurements of pairwise effective resistances. An analogous problem can be defined for other random walk-based similarities, such as personalized PageRank. We discuss initial results in this direction in Section \ref{sec:extensions}. 
\clearpage
\vspace{1.5em}
\begin{mdframed}[backgroundcolor=light-gray]
\vspace{-.5em}
\begin{problem}
[Graph Reconstruction From Effective Resistances]\label{prob:effres}
Reconstruct an unknown graph $G$ given a set of noisy effective resistance measurements, $$\bar r(u,v) = {r}_G(u,v) + n_{uv}$$
for each $(u,v) \in \mathcal{S}$, where $\mathcal{S} \subseteq [n] \times [n]$ is a set of node pairs and $n_{uv}$ is a 	potentially random noise term.
\end{problem}
\end{mdframed}
\vspace{.5em}

We focus on three interesting cases of Problem \ref{prob:effres}:
\begin{description}
\item[Problem 1.1]\label{fullProb} $\mathcal{S} = [n] \times [n]$ and $n_{uv} = 0$ for all $(u,v) \in S$. This is the \emph{full information recovery problem.}
\item[Problem 1.2]\label{subProb} $\mathcal{S}$ is a subset of $[n] \times [n]$ and $n_{uv} = 0$ for all $(u,v) \in \mathcal{S}$. 
In this setting we must learn $G$ from a limited number of exact effective resistances.
\item[Problem 1.3]\label{noiseProb} $\mathcal{S}$ is a subset of $[n] \times [n]$ and $n_{uv}$ is a random term, e.g. a mean $0$ normal random variable with variance $\sigma^2$: $n_{uv} \sim \mathcal{N}(0,\sigma^2)$. 
%In this setting we must learn $G$ from a set of noisy effective resistance measurements.
\end{description}

It is known that there in a unique graph consistent with any full set of effective resistance measurements (see e.g., \cite{spielmanLecture} or the related problem in \cite{wittmann}). Additionally, this graph can be computed by solving a fully determined linear system. So, we can solve Problem \pone\ exactly in polynomial time. We illustrate this in  Section~\ref{sec:full_reconstruct}.%and demonstrate that polynomial time recovery is also possible given all-pairs personal PageRank scores.

%This positive result should be contrasted to  the negative results on effective resistances in the  works of Luxburg, Radl, and Hein \cite{von2010hitting,von2014hitting}, which indicate that 
%for certain random graphs, effective resistance provides only \emph{local} information. While this is indeed the case for the limiting distributions of certain random graph models,  it is important to point out that, taken together, effective resistances {\em always} capture global structure, namely the whole graph!  

From a privacy and data mining perspective, the limited information settings of Problems \ptwo\ and \pthree\ are more interesting. In Section \ref{sec:full_reconstruct} we demonstrate that, when $G$ is a tree, exact recovery is possible for Problem \ptwo\ whenever $\mathcal{S}$ is a superset of $G$'s edges. However, in general, there  is no simple closed form solution to these problems, and exact recovery of $G$ is typically impossible. In particular, several graphs may be consistent with the measurements given.
Thus, we address these cases by reposing Problem \ref{prob:effres} as an optimization problem, in which we attempt to recover a graph matching the given effective resistances as best as possible.

\subsection{Optimization Formulation} 

A natural formalization of Problem \ref{prob:effres} is as a least squares problem. 
%We again focus on effective resistances, but note that a similar formulation is possible using personalized PageRank scores. 
%, in which we attempt to find $G$ whose effective resistances match the given measurements $\tilde r(u,v)$ as closely as possible.
%One natural approach is to adopt a least squares penalty, which yields Problem \ref{prob:partialeffres2}.

\vspace{0.5em}
\begin{mdframed}[backgroundcolor=light-gray] 
\vspace{-.5em}
\begin{problem}
\label{prob:effres2} Given a set of vertex pairs $\mathcal{S} \subseteq [n] \times [n]$ and a target effective resistance $\bar r(u,v)$ for each $(u,v) \in \mathcal{S}$: 
\begin{equation}\label{eq:reg}
\begin{aligned}
& \underset{\text{graph } \optG}{\text{minimize}} 
& & F(\optG) \eqdef \sum_{(u,v) \in \mathcal{S}} \left [{r}_\optG(u,v)-\bar r(u,v)\right]^2.
\end{aligned}
\end{equation}
\end{problem}
\end{mdframed}
\vspace{0.5em}

Using formula \eqref{eq:res} for effective resistances, 
Problem \ref{prob:effres2} can equivalently be viewed as an optimization problem over the set of graph Laplacians: we minimize $\sum_{(u,v)\in \mathcal{S}} \left [\xuv^T L^+ \xuv - \bar r(u,v)  \right ]^2$. While this set is convex, the objective function is not and thus it is unclear if it can be minimized provably  in polynomial time.
Nevertheless, we demonstrate that it is possible to solve the problem approximately using iterative methods. In Section \ref{sec:full_reconstruct_noise} we derive projected gradient and coordinate descent algorithms for the problem. Combined with a powerful initialization heuristic, our experiments show that these methods quickly converge to near global minimums of the objective function for many networks.

For Problem \ptwo, where $\bar r(u,v)$ comprise a subset of the exact effective resistances for some graph $G$, $\min_\optG F(\optG) = 0$. This minimum may be achieved by multiple graphs (including $G$ itself) if $\mathcal{S}$ does not contain all effective resistance pairs. Nevertheless, we demonstrate experimentally in Section \ref{sec:exp} that even when $\mathcal{S}$ contains a small fraction of these pairs, an approximate solution to Problem \ref{prob:effres2} often recovers significant information about $G$, including a large fraction of its edges. Interestingly, we find that, while Problem \ref{prob:effres2} only minimizes over the subset $\mathcal{S}$, the recovered graph typically matches $G$ on \emph{all} effective resistances, explaining why it contains so much structural information about $G$. For Problem \pthree, if $\mathcal{S} = [n] \times [n]$ and the noise terms $n_{uv}$ are distributed as i.i.d. Gaussians, it is not hard to see that Problem \ref{prob:effres2} gives the maximum likelihood estimator for $G$. We again show that an approximate solution can recover a large fraction of $G$'s edges. 

We note that while we can solve Problem \ref{prob:effres2} quickly via iterative methods, we leave open provable polynomial time algorithms for solving this problem in the settings of both Problems \ptwo\ and \pthree.

\spara{Convex relaxation.}
As an alternative to Problem \ref{prob:effres2}, we introduce an optimization formulation of Problem \ref{prob:effres} that \emph{is convex}. It is convenient here to optimize  over the convex set of graph Laplacians.

\vspace{.5em}
\begin{mdframed}[backgroundcolor=light-gray] 
\vspace{-.5em}
\begin{problem}
\label{prob:effres3} 
Let $\mathcal{L}$ be the convex set of $n \times n$ graph Laplacians.
Given a set of vertex pairs $\mathcal{S} \subseteq [n] \times [n]$ and a target effective resistance $\bar r_{(u,v)}$ for every $( u,v ) \in \mathcal{S}$,
\begin{equation*}
\begin{aligned}
& \underset{L \in \mathcal{L}}{\text{minimize}}
& & \tr(L) \\
& \text{subject to} & \quad \xuv^T L^+ \xuv \leq \bar r(u,v) & \text{~~~} \forall (u,v) \in \mathcal{S} \\ 
%& & L \in \mathcal{K} &    \\ 
\end{aligned}
\end{equation*}
\end{problem}
\end{mdframed}

%In the most general case we set $\mathcal{K} = \mathcal{L}$, allowing the solution to Problem \ref{prob:effres3} to be any graph Laplacian. However, $\mathcal{K}$ may also be used to encode additional constraints. For example, we can add degree constraints of the form $L_{ii} = deg(i)$, for $i \in V$. It is frequently the case that we know the degree sequence, or at least part of it, and adding these constraints can improve graph learning. % Problem~\ref{prob:partialeffres2} optimizes a natural objective that consists of two terms, the trace of the Laplacian, and the fitting error on the  known effective resistances.

Observe that we can trivially find a feasible solution for Problem \ref{prob:effres3} by setting $L$ to be a  large complete graph, in which case all effective resistances will be close to 0.  By Rayleigh's monotonicity law, decreasing the weight on edges in $L$ increases effective resistances. $\tr(L)$ is equal to the total degree of the graph corresponding to $L$, so the problem asks us to find a graph with as little total edge weight as possible that still satisfies the effective resistance constraints.

%Problem \ref{prob:effres3} is discussed in greater detail Section \ref{sec:partial_reconstruct_noise}. 
The disadvantage of this formulation is that it only encodes the target resistances $\bar r(u,v)$ as \emph{upper bounds} on the resistances of $L$. The advantage is that we can solve Problem \ref{prob:effres3} provably in polynomial time via semidefinite programming (see Section \ref{sec:partial_reconstruct_noise}). In practice, we find that it can sometimes effectively learn graph edges and structure from limited measurements.

Problem \ref{prob:effres3} is related to  work on convex methods for minimizing total effective resistance or relatedly, mixing time in graphs \cite{boyd2004fastest,sun2006fastest,ghosh2008minimizing,ghosh2008minimizing}. However, prior work does not consider pairwise resistance constraints and so is not suited to the graph learning problem. 

\vspace{3mm} 

%In Sections~\ref{sec:full_reconstruct},~\ref{full_reconstruct_noise},\ref{}, and ~\ref{} we solve Problems~\ref{},~\ref{},~\ref{}, and~\ref{} respectively. 

\section{Analytical Results and Algorithms}\label{sec:analytical}
%In this section we discuss our analytical results, showing when Problem \ref{prob:effres} can be solved exactly and how the optimization Problems \ref{prob:effres2} and \ref{prob:effres3} can be solved when exact recovery is not possible.

\subsection{Full Graph Reconstruction -- Problem \ref{prob:effres}}  
 \label{sec:full_reconstruct}

Problem \ref{prob:effres} can be solved exactly in polynomial time when $\mathcal{S}$ contains \emph{all resistance pairs} of some graph $G$ (i.e. Problem \pone). In this case, there is a closed form solution for $G$'s Laplacian $L$ and the solution is unique. This was pointed out in \cite{spielmanLecture}, however we include our own proof for completeness.

\begin{theorem}
\label{thrm:profeffres} 
If there is a feasible solution to Problem \pone\, then it is unique and can be found in $O(n^3)$ time. Specifically, the Laplacian $L$ of the unique solution $G$ is given by 
\begin{align}\label{eq:exactSolution}
-2 \cdot \left [ \left  (I - \frac{J}{n}\right)R\left(I-\frac{J}{n}\right)\right ]^+
\end{align}
where $R$ is the matrix with  $R_{u,v} = r_G(u,v)$ for all $u,v \in [n]$, $I$ is the $n \times n$ identity matrix , and $J$ is the $n \times n$ all ones matrix.
\end{theorem}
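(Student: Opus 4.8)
The plan is to directly exhibit the claimed formula by manipulating the matrix $R$ of all-pairs effective resistances, and then to argue uniqueness and running time. Write $d \eqdef \operatorname{diag}(L^+)$ for the vector of diagonal entries of $L^+$. Expanding \eqref{eq:res} entrywise gives $R_{u,v} = (e_u-e_v)^T L^+ (e_u-e_v) = L^+_{u,u} - 2L^+_{u,v} + L^+_{v,v}$, so that as a matrix identity $R = d\mathbf{1}^T + \mathbf{1}d^T - 2L^+$, where $\mathbf{1}$ is the all-ones vector and $J = \mathbf{1}\mathbf{1}^T$.

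First I would apply the centering projector $P \eqdef I - \frac{J}{n}$ on both sides, forming $PRP$. Since $P$ is symmetric with $P\mathbf{1} = 0$, the two rank-one terms are annihilated: $P(d\mathbf{1}^T)P = (Pd)(\mathbf{1}^T P) = 0$ and similarly $P(\mathbf{1}d^T)P = 0$. This isolates $PRP = -2\,PL^+P$. The crucial structural fact, which I would invoke next, is that for a connected graph the Laplacian $L$ has null space exactly $\operatorname{span}(\mathbf{1})$; hence so does $L^+$, i.e. $L^+\mathbf{1}=0$ and $\mathbf{1}^T L^+ = 0$. This gives $PL^+ = L^+$ and $L^+ P = L^+$, so $PL^+P = L^+$ and therefore $PRP = -2L^+$, equivalently $L^+ = -\tfrac12\left(I-\tfrac{J}{n}\right)R\left(I-\tfrac{J}{n}\right)$.

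To recover $L$ itself I would take a pseudoinverse of both sides. Because $L$ is a symmetric positive semidefinite matrix whose range is precisely the orthogonal complement of $\mathbf{1}$ (again by connectivity), it is its own ``double pseudoinverse,'' $(L^+)^+ = L$. Applying $(\cdot)^+$ to $L^+ = -\tfrac12 PRP$ and using that $(cM)^+ = c^{-1}M^+$ for any nonzero scalar $c$ then yields exactly \eqref{eq:exactSolution}.

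Finally, uniqueness follows immediately: the derivation shows that any feasible Laplacian $L$ must satisfy $L^+ = -\tfrac12 PRP$, which is completely determined by the data $R$, and hence $L = (L^+)^+$ is determined as well. The $O(n^3)$ bound comes from the two matrix multiplications forming $PRP$ followed by a single eigendecomposition (or SVD) to compute the pseudoinverse. I expect the step most in need of care to be the two appeals to connectivity: it is precisely the fact $\operatorname{null}(L) = \operatorname{span}(\mathbf{1})$ that makes both $PL^+P = L^+$ and $(L^+)^+ = L$ valid, so I would state this explicitly rather than treat it as routine.
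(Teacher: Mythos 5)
Your proof is correct, but it takes a genuinely different route from the paper's. The paper proves uniqueness by viewing \eqref{eq:resistance_system} as a linear system $Mx = -r$ in the $m = \binom{n}{2}$ off-diagonal entries of $L^+$ (after eliminating the diagonal via $L^+_{u,u} = -\sum_{v\neq u} L^+_{u,v}$), and then shows $M$ is full rank by writing $M = -|B||B|^T - 2I$ and explicitly computing its spectrum (eigenvalues $-2n$, $-n$, $-2$). Crucially, the paper does \emph{not} prove the closed form \eqref{eq:exactSolution}: since solving $Mx=-r$ directly would cost $O(n^6)$, it imports the formula from Lemma 9.4.1 of \cite{spielmanLecture} to obtain the $O(n^3)$ bound. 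Your argument --- writing $R = d\mathbf{1}^T + \mathbf{1}d^T - 2L^+$, killing the rank-one terms with the centering projector $P = I - \frac{J}{n}$, and using $\operatorname{null}(L^+)=\operatorname{span}(\mathbf{1})$ (valid since feasibility forces connectivity, as the paper defines resistances only for connected graphs) to get $PL^+P = L^+$ and hence $L^+ = -\frac{1}{2}PRP$, followed by $(L^+)^+ = L$ --- derives the closed form, the uniqueness, and the $O(n^3)$ runtime in one self-contained sweep, essentially reproving the cited lemma rather than invoking it. What each approach buys: yours is shorter, fully self-contained, and makes the mechanism of the formula transparent; the paper's full-rank computation gives independent structural information (the explicit eigenvalues of the resistance-to-$L^+$ linear map) that a reader might want, but at the cost of outsourcing the formula itself. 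Two small remarks: the identity $(M^+)^+ = M$ holds for \emph{every} matrix, so your appeal to the range of $L$ there is not needed (connectivity is genuinely needed only for $PL^+P = L^+$); and you might state in one line why a feasible $G$ must be connected, since that hypothesis does all the work.
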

\begin{proof}
For an $n$ node graph, the number of possible edges is $m = {n \choose 2}$. Let $B \in \mathbb{R}^{m \times n}$ be the vertex edge incidence matrix of the complete graph with a row equal to $\xuv = e_u- e_v$ for every $(u,v)$.

The Laplacian of any $n$ vertex graph can be written as $L = B^T W B$, for some $W \in \mathbb{R}^{m \times m}$ which is a nonnegative diagonal matrix with entries corresponding to the edge weights.

We can rewrite the effective resistance formula in \eqref{eq:res} as:
\begin{align}\label{eq:effresRewrite}
r_L(u,v) = \xuv^\top L^+ \xuv = (L^+)_{u,u} + (L^+)_{v,v} - 2(L^+)_{u,v}.
\end{align}

%The pseudoinverse of any matrix $L$ is unique, and further satisfies $(L^+)^+ = L$. Thus, given $L^+$ we can thus compute $L$ in $O(n^3)$ time. So to prove the Theorem, it suffices to argue that if given a valid set of effective resistances, then we can compute the unique $L^+$ matching these resistances efficiently.

Since $L^+$ is symmetric we need only determine $\frac{n(n+1)}{2}$ unknown entries to determine the full matrix. Moreover, since the all ones vector is in the null space of $L$ and therefore $L^+$, we see that:
\begin{align}\label{eq:diag}
L^+_{u,u} = -\sum_{v \neq u} L^+_{u,v},
\end{align}
and hence we can rewrite \eqref{eq:effresRewrite} as:
\begin{align}\label{eq:resistance_system}
r_{u,v}  = -\sum_{v' \neq v, u} (L^+)_{u,v'}-\sum_{u' \neq u,v} (L^+)_{u',v} - 4(L^+)_{u,v}.
\end{align}

Let $M$ be the $m \times m$ matrix with rows and columns indexed by pairs $u,v \in [n]$ with $u \neq v$ and $(u_1,v_1), (u_2,v_2)$ entry given by:
\begin{align*}
M_{(u_1,v_1), (u_2,v_2)} =\begin{cases}
-4 \text{ if } u_1=u_2\text{ and }v_1=v_2\\
-1\text{ if } u_1 = u_2\text{ or }v_1 = v_2\\
0\text{ otherwise}.
\end{cases}
\end{align*}
Let $r \in \field{R}^m$ contain each effective resistance $r_G(u,v)$.
We can see from \eqref{eq:resistance_system} that if we solve the linear system $M x = -r$, as long as $M$ is full rank and so the solution is unique, the entries of $x$ will give us each $(L^+)_{u,v}$ with $u \neq v$. We can then use these entries to recover the remaining diagonal entries of $L^+$ using \eqref{eq:diag}.

We can verify that $M$ is in fact always full rank  by  writing $M = -|B| |B|^T-2I,$ where $|B|$ denotes the matrix formed from $B$ by taking the absolute value of each of its entries. We note that the non-zero eigenvalues of $|B| |B|^T$ are equal to the non-zero eigenvalues of $|B^T||B|$ which is the $n \times n$ matrix $\ol{M} = (n-2)I + J $. $\ol{M}$ has eigenvalues $2n-2$ with multiplicity $1$ and $n-2$ with multiplicity $n-1.$ The remaining $m-n$ eigenvalues of $|B| |B|^T$ are zero. Consequently, the eigenvalues of $M$ are $-2n$ with multiplicity $1,$ $-n$ with multiplicity $n-1$ and $-2$ with multiplicity $m-n.$ Thus $M$ is full rank, proving that the effective resistances fully determine $L^+$ and thus $L$. 

Solving for $L$  via  the linear system $Mx = -r$ would require $O(n^6)$ time, however,  the closed form solution \eqref{eq:exactSolution} given in Lemma 9.4.1 of \cite{spielmanLecture} allows us to solve this problem in $O(n^3)$ time.
\end{proof}

\spara{Reconstruction from hitting times.} The above immediately generalizes to graph reconstruction from hitting times since, as discussed, for connected $G$, the effective resistance between $u,v$ can be written as $r_G(u,v) = \frac{c_G(u,v)}{vol(G)} = \frac{h_G(u,v) +  h_G(v,u)}{vol(G)}$. Thus, by  Theorem \ref{thrm:profeffres}, we can recover $G$ up to a scaling from all pairs hitting times. This recovers a result in \cite{wittmann}. Note that  if we scale all edge weights in $G$ by a  fixed factor, the hitting times do not  change. Thus recovery up to a scaling is the best we can hope for in this setting.

\spara{Reconstruction from other similarity  measures.} 
An analogous result to Theorem \ref{thrm:profeffres} holds for graph recovery from all pairs personalized PageRank scores, and for related measures such as Katz similarity scores \cite{katz1953new}. We discuss this direction in Section \ref{sec:extensions}.

%Specifically we have:
%%using the formula for the personal PageRank vectors given in \eqref{eq:pp} we have:
%\begin{theorem}\label{thm:pp} For any connected graph $G$, given personalized PageRank score $\pp_G(u,v)$ for each $(u,v) \in [n] \times [n]$, there is algorithm returning $G$ (up to a scaling of its edge weights) in $O(n^3)$ time.
%\end{theorem}
%\begin{proof}
%Let $W = \frac{1}{2}(I + AD^{-1})$ be the lazy random walk matrix for $G$ and $P$ be the $n \times n$ matrix with $P_{v,u} = p_G^\alpha(u,v)$. 
%Using the formula for the personal PageRank vectors in \eqref{eq:pp} we can write $P = \alpha (I - (1-\alpha) W)^{-1} \cdot I$. We thus have $W = \frac{I-\alpha P^{-1}}{1 - \alpha}$ and so $AD^{-1} = \frac{(1+\alpha) I - 2\alpha P^{-1}}{1-\alpha}$ and can be computed in $O(n^3)$ time. 
%
%In $O(n^2)$ time we can recover a scaling of $G$ from the matrix $AD^{-1}$. We assume without loss of generality that $deg(v_1) = 1$. We can then determine $deg(v)$ for any $v$ connected to $v_1$ since $A_{v_1,v} = \frac{w(v,v_1)}{deg(v)}$ and $A_{v,v_1} = \frac{w(v,v_1)}{deg(v_1)} = w(v,v_1)$ and so $deg(v) = \frac{A_{v,v_1} }{A_{v_1,v}}$. We can proceed in this manner, determining the degree of all neighbors of $v$ and so on, and eventually determining the degree of all nodes as long as $G$ is connected.
%\end{proof}

\spara{Are all pairs always necessary for perfect reconstruction?}  
For general graphs, Problem \ref{prob:effres} can only be solved exactly when $\mathcal{S}$ contains all ${n \choose 2}$ true effective resistances. However, given additional constraints on $G$, recovery is possible with much less information.
In particular, when $G$ is a tree, we can recover it (i.e., solve Problem \ptwo) whenever $\mathcal{S}$ is a superset of its edge set.

Roughly, since $G$ is a tree, the effective resistance $\rg(u,v)$ is equal to the length of the unique path connecting $u$ and $v$. As long as $\mathcal{S}$ includes all edges in $G$, it fully determines all path lengths and hence the effective resistances for all pairs $u,v$. We can thus recover $G$ via Theorem \ref{thrm:profeffres}. Formally:

\begin{theorem}\label{thm:tree}
If $G$ is a tree and a feasible solution to Problem \ptwo\ with edge set $E \subseteq \mathcal{S}$ then $G$ is unique and can be found in $O(n^3)$ time.
\end{theorem}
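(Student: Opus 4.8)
The plan is to reduce the problem to the full-information case (Theorem~\ref{thrm:profeffres}) by showing that the given measurements on $\mathcal{S}$ already pin down \emph{all} pairwise effective resistances of $G$. The one structural fact I would use throughout is that on a tree resistors combine in series along the unique $u$--$v$ path, so $\rg(u,v) = \sum_{e} \rg(e)$ summed over the edges $e$ on that path; in particular every edge resistance equals $1/w_e > 0$, and since $E \subseteq \mathcal{S}$ all of these edge resistances appear among the measurements $\bar r$.

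The crux is to identify which pairs of $\mathcal{S}$ are the actual tree edges, using only the revealed values. First I would form the weighted graph $H = ([n], \mathcal{S})$ with each pair $(u,v)$ weighted by $\bar r(u,v) = \rg(u,v)$; since $E \subseteq \mathcal{S}$ and $G$ spans $[n]$, $H$ is connected and contains $G$ as a spanning subgraph. The key claim is that $G$ is the unique minimum spanning tree of $H$. To prove it, take any non-edge pair $(u,v) \in \mathcal{S} \setminus E$: together with the tree path from $u$ to $v$ in $G$ (whose edges all lie in $E \subseteq \mathcal{S}$) it forms a cycle in $H$, and by series-additivity its weight $\rg(u,v)$ equals a sum of at least two strictly positive edge resistances, hence strictly exceeds each individual edge weight on that cycle. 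By the cycle property of minimum spanning trees the strictly heaviest edge of a cycle lies in no MST, so $(u,v)$ is excluded; as this holds for every such pair, every MST is contained in $E$, and since $|E| = n-1$ the MST equals $G$ and is unique. I expect this to be the main obstacle: the naive test ``$(u,v)$ is an edge iff no intermediate vertex $w$ attains $\rg(u,w)+\rg(w,v)=\rg(u,v)$'' may require resistances outside $\mathcal{S}$, whereas the MST characterization uses only the values that are actually given.

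Having recovered the tree together with its edge resistances, I would fill in the remaining entries of the resistance matrix $R$ by summing edge resistances along tree paths, obtaining $\rg(u,v)$ for \emph{every} pair $u,v \in [n]$. At this point the measurements have been completed to a full, consistent set of all-pairs effective resistances, so Theorem~\ref{thrm:profeffres} applies verbatim: it produces the Laplacian of $G$ in closed form, certifies that the completed resistances are realized by a \emph{unique} graph, and runs in $O(n^3)$ time.

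For the uniqueness clause and the time bound I would argue as follows. The procedure is deterministic in the data $\bar r$, so any other tree $G'$ whose edge set is contained in $\mathcal{S}$ and which is also feasible for Problem~\ptwo\ induces exactly the same weighted graph $H$; applying the cycle-property argument to $G'$ shows $G'$ is likewise the unique MST of $H$, whence $G' = G$. For the running time, computing the MST costs $O(n^2)$ on the at-most-$\binom{n}{2}$ weighted pairs, completing $R$ by path sums costs $O(n^3)$, and the invocation of Theorem~\ref{thrm:profeffres} is $O(n^3)$, so the total is $O(n^3)$ as claimed.
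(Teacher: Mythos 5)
Your proof is correct, but it takes a genuinely different route from the paper's at the key step. The paper never explicitly identifies the tree's edges: it builds the same weighted measurement graph you call $H$ (the paper's $\ol{G}$, with an edge of length $\bar r(u,v)$ for each pair in $\mathcal{S}$), defines the missing resistances as shortest-path distances in $\ol{G}$, and proves directly that these coincide with the true resistances --- the argument being that every edge in $\mathcal{S} \setminus E$ has length exactly equal to the tree path joining its endpoints, so deleting all such redundant edges changes no shortest-path distance, after which only the tree paths remain. You instead recover the tree topology first, via the claim that $G$ is the unique minimum spanning tree of $H$; your cycle-property argument is sound (each pair in $\mathcal{S} \setminus E$ closes a cycle with its tree path, and by series additivity its weight is a sum of at least two strictly positive edge resistances, making it the strict maximum of that cycle), and the counting step $|E| = n-1$ together with connectivity of $H$ correctly forces the MST to equal $G$. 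The two completions produce the identical all-pairs matrix, so both reductions to Theorem~\ref{thrm:profeffres} are valid, and your uniqueness argument (any feasible tree with edge set in $\mathcal{S}$ induces the same $H$ and hence equals its unique MST) matches the theorem's intended scope. What your version buys: uniqueness falls out immediately from the MST being a single object determined by the data, and the explicit tree actually makes the final invocation of Theorem~\ref{thrm:profeffres} superfluous --- once you hold the tree and its edge resistances you can read off $G$ directly via $w_e = 1/\bar r(e)$, and the path-sum completion itself runs in $O(n^2)$ by $n$ tree traversals, so your pipeline could be stated in $O(n^2)$ time rather than the looser $O(n^3)$ you quote. What the paper's version buys: it is shorter, avoids MST machinery entirely, and its shortest-path completion is reused verbatim as the initialization heuristic of Section~\ref{sec:full_reconstruct_noise}, which your MST-based identification would not generalize to (it relies on exact, noiseless tree measurements).
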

\begin{proof}
Let $P_{uv}$ be the unique path between $u,v$ in $G$. It is well known \cite{chandra1996electrical} that:
\begin{align}\label{eq:treeEffRes}
\rg(u,v) = \sum_{e \in P_{uv}} 1/w_e.
\end{align}
For $(u,v) \in \mathcal{S}$ set $\bar r(u,v) = \rg(u,v)$.  Let  $\ol{G}$ be an undirected graph with an edge for each $(u,v) \in \mathcal{S}$ with length $\rg(u,v)$. For all $(u,v) \notin \mathcal{S}$, set $\bar r(u,v)$ to the shortest path distance between $u$ and $v$ in  $\ol{G}$.

\spara{Claim.} $\bar r(u,v) = \rg(u,v)$ for all $(u,v) \in [n] \times [n]$.\vspace{.5em}\\
%This is immediately true for each $(u,v) \in \mathcal{S}$ and thus for each $(u,v) \in E$ since $E \subseteq \mathcal{S}$ by assumption.
For \emph{any pair} $(u,v)$, we have $\bar r(u,v) \le \rg(u,v)$. The length of shortest path between $u,v$ in $\ol{G}$ is certainly at most the length of $P_{uv}$, which is contained in $\ol{G}$ since $E \subseteq \mathcal{S}$. $P_{uv}$'s length in $\ol G$ is: $\sum_{e \in P_{uv}} r_G(e) = \sum_{e \in P_{uv}} 1/w_e$ using \eqref{eq:treeEffRes}. Thus, $\bar r(u,v) \le \sum_{e \in P_{uv}} 1/w_e = r_G(u,v)$, again using \eqref{eq:treeEffRes}.

Further, $P_{uv}$ is in fact a shortest path between $u,v$ in $\ol{G}$, giving that $\bar r(u,v) = \rg(u,v)$. This is because the length of every edge $(u,v) \in \mathcal{S}$ that is not in $E$ just equals the length of path $P_{uv}$ in $\ol{G}$ (i.e., $r_G(u,v)$) and so removing this edge from $\ol G$ does not change any shortest path distance. So we can assume that $\ol{G}$  just contains the edges in $E$, and so $P_{uv}$ is the unique path between $u,v$.
\vspace{.5em}

Given the above claim, the theorem follows since we can compute each $\bar r(u,v)$ from the effective resistances of the edges in $\mathcal{S}$ and can then compute $G$ from these resistances by Theorem  \ref{thrm:profeffres}.
\end{proof} 

The problem of recovering trees from pairwise distance measurements is a central problem in phylogenetics. There are other cases when just a subset of effective resistances is known to allow full recovery, for example when the effective resistances between any pair of leaves is known \cite{stone2009,spielmanLecture}. Also related to our analysis for trees lies is the work of Mathieu and Zhou \cite{mathieu2013graph}, which reconstructs graphs with bounded degree from pairwise distance measurements. 
 
\subsection{Graph Learning via Least Squares Minimization -- Problem~\ref{prob:effres2} }  
\label{sec:full_reconstruct_noise} 

When Problem \ref{prob:effres} cannot be solved exactly, e.g. in the settings of  Problems \ptwo\ and \pthree, an effective surrogate is to solve Problem \ref{prob:effres2} to find a graph with effective resistances close to the given target resistances. As we demonstrate experimentally in Section \ref{sec:exp}, this yields good solutions to Problems \ptwo\ and \pthree\ in many cases. Problem \ref{prob:effres2} is non-convex, however we show that a good solution can often be found efficiently via projected gradient descent. 

\spara{Optimizing over edge weights.}
Let $m = {n \choose 2}$. We write the Laplacian of the graph $H$ as $L(w) \eqdef B^T \diag(w) B$, where $w \in \mathbb{R}^{m}$ is a non-negative vector  whose entries correspond  to the edge weights in $H$, $\diag(w)$ is the $m \times m$ matrix with $w$  as its diagonal, and $B \in\mathbb{R}^{m \times n}$ is the vertex edge incidence matrix with a row equal to $\xuv = e_u-e_v$ for every possible edge $(u,v) \in [n] \times [n]$.

Optimizing the objective function $F(H)$ in Problem \ref{prob:effres2} is equivalent to optimizing $F(w)$ over the edge weight vector $w$, where we define $F(w) \eqdef F(H)$ for the unique $H$ with Laplacian equal to $L(w)$.

We restrict $w_i \ge 0$ for all $i$ and project to this constraint after each gradient step simply by setting $w_i := \max(w_i,0)$. The gradient of $F(w)$ can be computed in closed form. We first define an auxiliary variable, $R(w) \in \mathbb{R}^{m \times m}$, whose diagonal contains all pairwise effective resistances of $H$ with weight vector $w$:
\begin{definition}
For $w \in \mathbb{R}^m$ with $w_i \ge 0$ for all $i$, define
\begin{align*}
R(w)  = B L(w)^+ B^T.
\end{align*} 
\end{definition}
  
Using $R(w)$ we can compute the gradient of $F(w)$ by:
\begin{proposition}
\label{prop:gradient}
Let $\circ$ denote the Hadamard (entrywise) product for matrices. 
%Let $I_{\mathcal{S}} \in \field{R}^{m \times m}$ be the diagonal matrix with a $1$ at each entry corresponding to $i \in \mathcal{S}$ and $0$s elsewhere. 
Define the error vector $\Delta(w) \in \mathbb{R}^m$ as having $\Delta(w)_{i} = \bar r(i) - [R(w)]_{i,i}$ for all $i \in \mathcal{S}$ and $0$s elsewhere. We have:
\begin{align*}
\nabla F(w)= 2\left(R\circ R\right) \Delta(w)  
\end{align*} 
\end{proposition}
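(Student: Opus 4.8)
The plan is to differentiate $F(w)$ directly with respect to each edge weight $w_k$ and show that the result equals the $k$-th entry of $2(R\circ R)\Delta(w)$. Write $\chi_i$ for the $i$-th row of $B$ (so $\chi_i = e_u - e_v$ for the pair indexing row $i$). The definition $R(w) = BL(w)^+B^T$ gives $[R(w)]_{i,j} = \chi_i^\top L(w)^+ \chi_j$, so the diagonal entry $[R(w)]_{i,i} = \chi_i^\top L(w)^+ \chi_i$ is exactly the effective resistance $r_H(i)$ via \eqref{eq:res}. Hence $F(w) = \sum_{i \in \mathcal{S}} \left([R(w)]_{i,i} - \bar r(i)\right)^2$, and the chain rule yields
\begin{align*}
\frac{\partial F}{\partial w_k} = 2\sum_{i \in \mathcal{S}} \left([R(w)]_{i,i} - \bar r(i)\right)\frac{\partial [R(w)]_{i,i}}{\partial w_k},
\end{align*}
so it remains only to evaluate $\partial [R(w)]_{i,i}/\partial w_k$.

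Next I would push the perturbation $\partial L(w)/\partial w_k$ through the pseudoinverse. Since $L(w) = B^\top \diag(w) B = \sum_j w_j \chi_j \chi_j^\top$, we have $\partial L(w)/\partial w_k = \chi_k \chi_k^\top$. For a connected $H$ the range of $L(w)$ is always $\mathbf{1}^\perp$, independent of $w$, and every $\chi_i$ lies in this subspace because $\mathbf{1}^\top \chi_i = 0$. Restricting to $\mathbf{1}^\perp$, $L(w)$ acts as an invertible operator with $L(w)^+$ its inverse, so the standard identity $\partial L^{-1}/\partial w_k = -L^{-1}(\partial L/\partial w_k)L^{-1}$ applies, and contracting with the range vectors $\chi_i$ on both sides annihilates the null-space correction terms that appear in the general pseudoinverse derivative formula. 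This gives
\begin{align*}
\frac{\partial [R(w)]_{i,i}}{\partial w_k} = \chi_i^\top \frac{\partial L(w)^+}{\partial w_k}\chi_i = -\chi_i^\top L^+ \chi_k \chi_k^\top L^+ \chi_i = -\left(\chi_i^\top L^+ \chi_k\right)^2 = -[R(w)]_{i,k}^2,
\end{align*}
using symmetry of $L^+$ in the third step.

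Substituting back, and recalling $[R\circ R]_{i,k} = [R(w)]_{i,k}^2$ together with $\Delta(w)_i = \bar r(i) - [R(w)]_{i,i}$,
\begin{align*}
\frac{\partial F}{\partial w_k} = 2\sum_{i \in \mathcal{S}} \left(\bar r(i) - [R(w)]_{i,i}\right)[R\circ R]_{i,k} = 2\sum_i [R\circ R]_{k,i}\,\Delta(w)_i,
\end{align*}
where the sum extends to all $i$ because $\Delta(w)_i = 0$ for $i \notin \mathcal{S}$, and where I used that $R\circ R$ is symmetric. The right-hand side is precisely $[2(R\circ R)\Delta(w)]_k$, which establishes the claim componentwise.

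The step I expect to be the main obstacle is rigorously justifying the pseudoinverse differentiation: the Moore--Penrose pseudoinverse is differentiable only where the rank is locally constant, and the general Golub--Pereyra derivative formula carries correction terms supported on the null space of $L$. The clean resolution is the observation above that $\chi_i \in \mathbf{1}^\perp = \mathrm{range}\big(L(w)\big)$ for every $w$ for which $H$ is connected, so those correction terms vanish upon contraction with $\chi_i$; I would make the connectedness assumption explicit (or equivalently work entirely within the fixed subspace $\mathbf{1}^\perp$, where the ordinary inverse-derivative identity is valid without caveat).
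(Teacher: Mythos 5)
Your proposal is correct and takes essentially the same route as the paper: both reduce the claim to the pseudoinverse derivative $\partial L(w)^+/\partial w_k = -L(w)^+\chi_k\chi_k^T L(w)^+$, then apply the chain rule to get $\partial F/\partial w_k = 2\sum_{i\in\mathcal{S}}\left(\bar r(i)-[R(w)]_{i,i}\right)[R(w)]_{i,k}^2$ and recognize the Hadamard product. The only divergence is in how that one step is justified --- the paper takes an $\epsilon\to 0$ limit of a Sherman--Morrison-type identity for the pseudoinverse, while you invoke rank-constancy on the fixed range $\mathbf{1}^\perp$ to kill the null-space correction terms --- and your version is in fact the slightly more careful one, since it makes explicit the connectivity/range condition that the paper's pseudoinverse Sherman--Morrison identity implicitly requires.
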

\begin{proof}
We begin by observing that, letting $e_i$ be the $i$th standard basis vector in $\mathbb{R}^m$, for any weight vector $w,$ the graph Laplacian corresponding to the weight vector $w +\epsilon e_i,$ is $L(w)+\epsilon b_i b_i^T.$ The Sherman-Morrison formula for the matrix pseudoinverse yields:
\begin{align*}
(L(w)+\epsilon b_ib_i^T)^+ = L(w)^+ -\epsilon \frac{L(w)^+ b_ib_i^T L(w)^+}{1+\epsilon b_i^TL(w)^+b_i},
\end{align*}
and, hence thinking of $L(w)^+$ as a matrix-valued function of $w$,
\begin{align*}
\frac{\partial L(w)^+}{\partial w_i} &= \lim_{\epsilon \rightarrow 0} \frac{1}{\epsilon}\left[L(w)^+ -\epsilon \frac{L(w)^+ b_ib_i^T L(w)^+}{1+\epsilon b_i^TL(w)^+b_i} - L(w)^+\right]\\
&= -L(w)^+b_ib_i^T L(w)^+.
\end{align*}
Let $R_i$ denote the $i^{th}$ column of $R(w) = B L(w)^+ B^T$. By linearity:
\begin{align*}
\frac{\partial R}{\partial w_i} = - B \left (L(w)^+b_ib_i^T L(w)^+\right) B^T = -R_{i} R_{i}^T.
\end{align*}
\begin{align*}
&\text{Thus,} & &\frac{\partial F}{\partial w_i} = 2 \sum_{j \in \mathcal{S}} \left (\bar r(j) - [R(w)]_{j,j}\right ) \cdot [R(w)]_{i,j}^2,
\end{align*}
and so we obtain that the gradient equals 
$
\nabla F(w)= 2\left(R\circ R\right)\Delta(w).
$
\end{proof}

While gradient descent works well in our experiments, one may  also apply second order methods, which require $F(w)$'s Hessian. Using similar computations to those in Proposition~\ref{prop:gradient} we obtain: 

\begin{proposition} Let $I_{\mathcal{S}} \in \field{R}^{m \times m}$ be the diagonal matrix with a $1$ at each entry corresponding to $i \in \mathcal{S}$ and $0$s elsewhere and $\Delta(w)$  be as defined in Proposition~\ref{prop:gradient}. 
The Hessian matrix of $F(w)$ is:
\begin{align*}
H_F(w) = -4\left [R \diag (\Delta(w)) R \right ] \circ R + 2(R \circ R) I_{\mathcal{S}} (R \circ R).
\end{align*}
\end{proposition}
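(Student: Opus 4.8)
The plan is to differentiate the gradient formula from Proposition~\ref{prop:gradient} a second time, working entrywise and then reassembling the result into matrix form. I would start from the scalar expression derived in the proof of Proposition~\ref{prop:gradient},
\[
\frac{\partial F}{\partial w_i} = 2 \sum_{j \in \mathcal{S}} \Delta(w)_j\, [R(w)]_{i,j}^2,
\]
where $\Delta(w)_j = \bar r(j) - [R(w)]_{j,j}$, and differentiate with respect to $w_k$. The product rule splits the Hessian entry into a term coming from the dependence of $\Delta(w)_j$ on $w_k$ and a term coming from the dependence of $[R(w)]_{i,j}^2$ on $w_k$.

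Then I would reuse the single derivative identity already established in Proposition~\ref{prop:gradient}, namely $\frac{\partial R}{\partial w_k} = -R_k R_k^T$, where $R_k$ is the $k$-th column of $R(w)$. Written entrywise this says $\frac{\partial [R(w)]_{a,b}}{\partial w_k} = -[R(w)]_{a,k}[R(w)]_{b,k}$, and by symmetry of $R(w)$ I may freely swap the order of indices. This immediately gives the two pieces I need: $\frac{\partial \Delta(w)_j}{\partial w_k} = -\frac{\partial [R(w)]_{j,j}}{\partial w_k} = [R(w)]_{j,k}^2$ and $\frac{\partial [R(w)]_{i,j}}{\partial w_k} = -[R(w)]_{i,k}[R(w)]_{j,k}$. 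Substituting these into the two terms from the product rule yields, after collecting constants,
\[
\frac{\partial^2 F}{\partial w_i \partial w_k} = 2\sum_{j \in \mathcal{S}} [R]_{i,j}^2 [R]_{j,k}^2 \;-\; 4\sum_{j \in \mathcal{S}} \Delta(w)_j\, [R]_{i,j}\,[R]_{j,k}\,[R]_{i,k}.
\]

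The final step is to recognize each scalar double sum as the $(i,k)$ entry of the claimed matrices. The first sum is exactly $2[(R\circ R)\, I_{\mathcal{S}}\, (R \circ R)]_{i,k}$, since inserting the diagonal indicator $I_{\mathcal{S}}$ restricts the contracted index $j$ to $\mathcal{S}$ and each Hadamard square supplies a factor of $[R]^2$. The second sum, in which the factor $[R]_{i,k}$ does not involve the summation index $j$, factors out as a Hadamard multiplication by $R$ of the matrix product $R\,\diag(\Delta(w))\,R$; noting that $\diag(\Delta(w))$ already has zero entries outside $\mathcal{S}$, this is precisely $-4[(R\,\diag(\Delta(w))\,R)\circ R]_{i,k}$. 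Assembling the two pieces gives the stated Hessian.

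I expect the main obstacle to be purely bookkeeping: correctly tracking which index is differentiated, using the symmetry $[R]_{j,k}=[R]_{k,j}$ consistently, and matching the lone $[R]_{i,k}$ factor in the second sum to the outer Hadamard product with $R$ rather than to the contracted matrix product. Care is also needed to ensure the restriction to $\mathcal{S}$ is encoded correctly --- once via the indicator $I_{\mathcal{S}}$ in the first term and once via the zero entries of $\diag(\Delta(w))$ in the second --- so that no spurious terms outside $\mathcal{S}$ appear.
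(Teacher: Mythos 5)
Your proof is correct and follows exactly the route the paper intends: the paper omits the proof, saying only that the Hessian follows ``using similar computations'' to those in Proposition~\ref{prop:gradient}, and your argument carries out precisely that computation, reusing the identity $\frac{\partial R}{\partial w_k} = -R_k R_k^T$ and reassembling the entrywise double sums into the Hadamard/matrix-product form. Both derivative substitutions, the constants $-4$ and $2$, and the two encodings of the restriction to $\mathcal{S}$ (via $\diag(\Delta(w))$ and via $I_{\mathcal{S}}$) check out.
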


\spara{Acceleration via coordinate descent.}
Naively computing the gradient $\nabla F(w)$ via Proposition~\ref{prop:gradient} requires computing the full $m \times m$ matrix $R(w)$, which can be prohibitively expensive for large graphs -- recall that $m = {n \choose 2} = O(n^2)$. Note however, that the error vector $\Delta(w)$ only has nonzero entries at positions corresponding to the node pairs in $\mathcal{S}$. Thus, it suffices to compute just $|\mathcal{S}|$ columns of $R$ corresponding to these pairs, which can give a significant savings.

We obtain further savings using block coordinate descent. 
%In each iteration, we can choose a batch of $b$ edge weights, and only consider updates on these weights.
%
%When $|\mathcal{S}|$ is large, we can employ stochastic gradient descent. We randomly sample a subset $\mathcal{S}_t \subseteq \mathcal{S}$ and let $F_t(w)$ be the objective function restricted to these target resistances, i.e., $F_t(w) = \sum_{j \in \mathcal{S}_t} [[R(w)]_{j,j} - \bar r(j)]^2$. By linearity, $\E[\nabla F_t(w)] = \nabla F(w)$. Thus our gradient  step will still be correct in expectation. We can trade accuracy of each step with  the size of $\mathcal{S}_t$ to optimize our  runtime.
%
At each step we restrict our updates to a subset of edges $\mathcal{B} \subseteq [n] \times [n]$. Let $I_{\mathcal{B}}$ be the matrix with a $1$ at diagonal entries corresponding to elements of $\mathcal{B}$ and $0$'s elsewhere. We step in the direction of $I_{\mathcal{B}} \nabla F(w)$. Computing this step only requires forming the rows of $R$ corresponding to edges in $\mathcal{B}$. A  typical way to choose  $\mathcal{B}$ is at random. See Section~\ref{expsetup} for the actual implementation details.
%refined way, e.g., based on the support of $w$ in previous iterations.

\spara{Initialization.}
A good initialization for gradient descent can significantly accelerate the solution of Problem \ref{prob:effres2}. We use a strategy based on the exact solution to Problem \pone\ in Theorem \ref{thrm:profeffres}.

Since effective resistances form a metric, by triangle inequality, for any $u,v,w \in [n]$, $r_H(u,v) \le r_H(u,w) + r_H(w,v)$. Guided by this fact, given target resistances $\bar r(u,v)$ for $(u,v) \in \mathcal{S}$, we first ``fill in'' the constraint set. For $(w,z) \notin \mathcal{S}$, we set $\bar r(w,z)$ equal  to the shortest path distance in the graph $\bar{G}$ which has an edge for each pair in $\mathcal{S}$ with length $\bar r(u,v)$.

We thus obtain a  full set of target effective resistances. We can form $R$  with  $R_{u,v} = \bar r(u,v)$  and initialize the Laplacian of $H$ using the formula given in \eqref{eq:exactSolution} in Theorem \ref{thrm:profeffres}.
However, this formula is quite unstable and generally yields an output which is far from a graph Laplacian even when $R$ is corrupted by a small amount of noise. So we instead compute a regularized estimate, 
$$\tilde  L = -2 \cdot \left [ \left  (I - \frac{J}{n}\right)R\left(I-\frac{J}{n}\right) + \lambda I \right]^+,$$
where  $\lambda > 0$ can be chosen e.g. by line search. Generally, $\tilde  L$ will not be a valid graph Laplacian, but by removing negative edge weights, we typically obtain a good initialization for Problem \ref{prob:effres2}. %which works well in practice.

\subsection{Graph Learning via Convex Optimization -- Problem \ref{prob:effres3}}  
 \label{sec:partial_reconstruct_noise} 
 We finally discuss how to efficiently solve our convex formulation, Problem~\ref{prob:effres3}.
 %While this optimization problem seeks a graph $\optG$ whose effective resistances are just upper bounded by the target resistances $\bar r(u,v)$, we find that in practice it can sometimes effectively recover structure of an underlying graph $G$ from a subset of effective resistance measurements.
We express this problem as a semidefinite program (SDP) which can be solved via a number of available packages.

We can re-express our effective resistance constraint as a positive semidefinite constraint using the Schur complement condition:
\begin{align*}
\xuv^T L^+ \xuv \leq  \bar r(u,v)  & \text{\hspace{1em} iff} & 
\begin{bmatrix}L & \xuv \\ \xuv^T & \bar r(u,v)\end{bmatrix} &\succeq 0 \text{ and }  L \succeq 0.
\end{align*}
Doing so yields the following program:

\vspace{.5em}
\begin{mdframed}[backgroundcolor=light-gray] 
\vspace{-.5em}
\begin{problem}[SDP Form of Problem~\ref{prob:effres3}]\label{prob:sdp}
Given vertex pairs $\mathcal{S} \subseteq [n] \times [n]$, and target effective resistance $\bar r(u,v)$ for every $(u,v) \in \mathcal{S}$,\\
 \vspace{-2.25em}
\begin{align*}
&\underset{L \in \mathcal{L}}{\text{minimize}} \tr(L)\\
%& \tr(L) \\
& \text{subject to} \\
L \succeq 0 \text{ and }& \text{ $\forall\ (u,v)\in \mathcal{S}$, } \begin{bmatrix}L & \xuv \\ \xuv^T & \bar r(u,v)\end{bmatrix} \succeq 0
\end{align*}
\end{problem}
%\vspace{-.5em}
\end{mdframed}
\vspace{0.5em}
We require $L$  to be a valid graph Laplacian (i.e., constrain $L \in \mathcal{L}$) by adding linear constraints of the form:% to the above formulation. Specifically,
\begin{align*}
\text{$\forall$ }i,\ L_{i,i} = - \sum_{j \neq i} L_{i,j}\text{\hspace{.25em} and $\forall$ }i \neq j,\ L_{i,j}  \le 0.
\end{align*}
%Finally, we can restrict $L \in \mathcal{K} \subseteq \mathcal{L}$ via additional constraints. For example, if we have degree sequence information, we may add  the constraint $-\sum_{j \neq i} L_{i,j} = \deg(i)$ for all $i$.

\subsection{Extensions to Other Similarity Measures}\label{sec:extensions}

As discussed, our results generalize to random walk-based node similarities beyond effective resistances, such as personalized PageRank scores. Given localization parameter $\alpha \ge 0$, the personalized PageRank score $\pp_G(u,v)$ is the probability that a lazy random walk on graph $G$ which jumps back to $u$ with probability $\alpha$ in each step is at node $v$ in its stationary distribution \cite{page1999pagerank,haveliwala2003topic,andersen2006local}. 
%$\pp_G(u,v)$ will tend to be larger when $u$ and $v$ are well-connected, by many short paths. Unlike $r_G(u,v)$, the personalized PageRank scores do not form a metric. In particular, they are not symmetric: $\pp_G(u,v) \neq \pp_G(v,u)$ in general. 

Letting $W = \frac{1}{2}(I + AD^{-1})$ be the lazy random walk matrix, $\pp_G(u,v)$ is the $v^{th}$ entry of the personalized PageRank vector:
\begin{align}\label{eq:pp}
\pp_G(u) = \alpha (I - (1-\alpha) W)^{-1} e_u.
\end{align}
%%It is well known that this vector gives the stationary distribution of a random walk on $G$ which in each step jumps to $u$ with probability $\alpha$ and makes a lazy random walk step with probability $1-\alpha$. We denote the $v^{th}$ entry of this vector (i.e., the probability of being at vertex $v$ in the stationary distribution) by 
This vector gives the stationary distribution for the random walk on $G$ and thus
the personalized  PageRank $\pp_G(u,v)$ is its $v^{th}$ entry.

It is not hard to show an analogous result to Theorem \ref{thrm:profeffres}, that given a full set of exact personalized PageRank scores, full recovery of $G$ is possible. Roughly, if we let $P$ be the matrix with $\pp_G(u)$ as its $u^{th}$ column, we have $P = \alpha (I - (1-\alpha) W)^{-1} $ and can thus solve for the random walk matrix $W$, and the graph $G$. This gives:
%Specifically we have:
%%using the formula for the personal PageRank vectors given in \eqref{eq:pp} we have:
\begin{theorem}\label{thm:pp} For any connected graph $G$, given personalized PageRank score $\pp_G(u,v)$ for each $(u,v) \in [n] \times [n]$, there is algorithm returning $G$ (up to a scaling of its edge weights) in $O(n^3)$ time.
\end{theorem}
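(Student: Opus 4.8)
The plan is to invert the closed-form expression \eqref{eq:pp} to recover the lazy random walk matrix $W$, and then to extract the graph from $W$ by exploiting the symmetry of the adjacency matrix. First I would assemble the given scores into the matrix $P \in \mathbb{R}^{n\times n}$ whose $u^{th}$ column is $\pp_G(u)$, so that $P_{v,u} = \pp_G(u,v)$; this is possible precisely because we are handed the full set of pairs. By \eqref{eq:pp}, this matrix satisfies $P = \alpha\,(I - (1-\alpha)W)^{-1}$, exactly as noted in the paragraph preceding the statement.

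Next I would justify that $P$ is invertible, so that the relation can be solved for $W$. Although $W = \frac12(I + AD^{-1})$ is not symmetric, it is similar to the symmetric matrix $\frac12(I + D^{-1/2}AD^{-1/2})$ (conjugating by $D^{1/2}$), so its eigenvalues are real; since the normalized adjacency $D^{-1/2}AD^{-1/2}$ has eigenvalues in $[-1,1]$, those of $W$ lie in $[0,1]$. Hence $I - (1-\alpha)W$ has eigenvalues in $[\alpha,1]$ and is invertible for $\alpha > 0$, making $P$ invertible as well. Inverting then yields
$$W = \frac{1}{1-\alpha}\left(I - \alpha P^{-1}\right),$$
and a single $n\times n$ inversion costs $O(n^3)$.

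From $W$ I immediately recover the column-normalized adjacency matrix $N \eqdef AD^{-1} = 2W - I$. The remaining task — which I expect to be the crux — is to recover $A$, and hence $G$, from $N$. Here I would use that $A$ is symmetric: writing $A = ND$, symmetry forces $N_{i,j}\,d_j = N_{j,i}\,d_i$ for every pair, i.e. $d_j/d_i = N_{j,i}/N_{i,j}$ whenever $(i,j)$ is an edge (equivalently $N_{i,j} \neq 0$). Because $G$ is connected, these pairwise ratio constraints chain together along a spanning tree and pin down the whole degree vector $(d_1,\dots,d_n)$ up to a single global multiplicative constant; the remaining (non-tree) edges are automatically consistent since the true graph realizes all the constraints. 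Fixing the constant (say $d_1 = 1$) and setting $A = ND$ returns the graph up to a uniform scaling of its edge weights. Propagating the ratios over a spanning tree is $O(n^2)$, so the total running time is dominated by the inversion at $O(n^3)$.

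Finally I would observe that the scaling ambiguity is genuinely unavoidable and mirrors the hitting-time remark following Theorem~\ref{thrm:profeffres}: multiplying every edge weight by a common constant leaves $AD^{-1}$, and therefore $W$ and every personalized PageRank score, unchanged. Thus recovery up to a global scaling is the best one can hope for, matching the statement of the theorem.
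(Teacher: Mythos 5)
Your proposal is correct and takes essentially the same route as the paper, whose proof is only a brief sketch: assemble $P$, invert the relation $P = \alpha\left(I-(1-\alpha)W\right)^{-1}$ to get $W$, and then solve for $G$. Your additional details---invertibility of $P$ via the similarity of $W$ to the symmetric matrix $\frac{1}{2}\left(I + D^{-1/2}AD^{-1/2}\right)$, and recovery of the degree vector up to one global constant by propagating the symmetry constraints $N_{i,j}d_j = N_{j,i}d_i$ along a spanning tree of the connected graph---correctly fill in exactly the steps the paper leaves implicit, and your closing remark that the scaling ambiguity is unavoidable matches the paper's hitting-time discussion.
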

Further, it is possible to formulate a problem analogous to Problem \ref{prob:effres2} and solve for a graph matching a subset of personalized PageRank measurements as closely as possible. As shown in Figure \ref{fig:ppvseffres}, personalized PageRank often gives a stronger signal of global graph structure than effective resistance. 
\begin{figure}[h!]
\centering
\includegraphics[width=.95\textwidth]{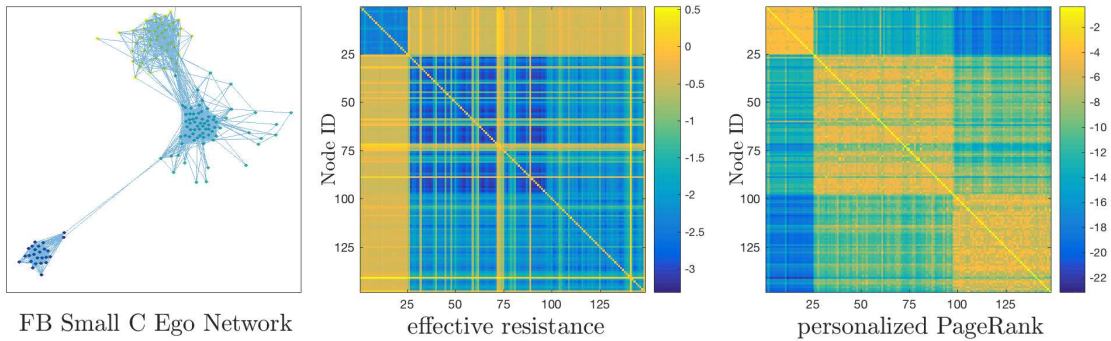} 
\caption{Personalized PageRank   correlates better  than commute times with the cluster structure in the {\sc FB Small C} network (see Table \ref{tab:datasets}).   
Heatmaps are shown in log scale.}
\label{fig:ppvseffres}
\end{figure}
To create the plot, nodes are sorted by their value in the Laplacian Fielder vector, which corresponds roughly to residence in different clusters.  In an extended version of our work, we will provide detailed empirical results with personalized PageRank, and other random walk measures.

\section{Empirical results}
\label{sec:exp}
In this section we present an in depth experimental study of how well our methods can learn a graph given  
a set of (noisy) effective resistance measurements. We seek to answer two key questions:

\begin{enumerate}
\item Given a set of effective resistance measurements, can we find a graph matching these measurements via the optimization formulations posed in Problems \ref{prob:effres2} and \ref{prob:effres3} and the corresponding algorithms discussed in Sections \ref{sec:full_reconstruct_noise} and \ref{sec:partial_reconstruct_noise}?
\item What structure does the graph we learn via our optimization approach share with the underlying network that produced the resistance measurements? Can it be used to predict links in the network? Does it approximately match the network on effective resistances outside the given set of measurements, or share other global structure?
\end{enumerate}

We address these questions by examining a variety of synthetic, and  social network graphs.

\subsection{Experimental Setup} 
\label{expsetup}
Table~\ref{tab:datasets} lists the networks analyzed in our experiments. These include two synthetic examples: an $8 \times 8$ two dimensional grid graph and a $k$-nearest neighbor graph constructed for vectors drawn from a Gaussian mixture model with two clusters. 
The other networks are Facebook `ego networks' obtained from the Stanford Network Analysis Project (SNAP) collection \cite{snapnets,leskovec2012learning}. Each of these networks is formed by taking the largest connected component in the social circle of a specific user (whose nodeId is shown in Table \ref{tab:datasets}). 

\begin{table}[!ht]
\centering
\begin{tabu}{|l|c|c|} \hline
Name  & \# of nodes, $n$  & \# of edges, $m$ \\
 \tabucline[1pt]{1-3}
{\sc Grid (synthetic)} &  64 & 224 	 \\
{\sc  k-nn (synthetic) }  & 80 & 560  \\ \hline
{\sc FB Small A \footnotesize(nodeId 698)}   & 40	& 220 \\ 
{\sc FB Small B \footnotesize(nodeId 3980)}   & 44	& 138	 \\ 
{\sc FB Small C \footnotesize(nodeId 414)}   & 148	& 1692 \\
{\sc FB Small D \footnotesize(nodeId 686)}   & 168	& 1656	 \\  \hline
{\sc FB Medium A \footnotesize(nodeId 348)}   & 224	& 3192  \\  
{\sc FB Medium B \footnotesize(nodeId 0)}   & 324	& 2514 	 \\  \hline
{\sc FB Large A \footnotesize(nodeId 3437)}   & 532	& 4812  \\  
{\sc FB Large  B \footnotesize(nodeId 1912)}   & 795	& 30023 	 \\  \hline 
\end{tabu}
\vspace{1em}
\caption{\label{tab:datasets}  Datasets for experiments. {\sc FB} denotes ``Facebook''.}
\end{table}

For all experiments, we provide our algorithms with effective resistances that are uniformly sampled from the set of all ${n \choose 2}$ effective resistances. We  sample  a fixed fraction $f \myeq \frac{|\mathcal{S} |}{{n \choose 2}} \times 100\%$ of all possible measurements. We typically use $f \in \{10 , 25, 50, 100 \}\%$. In some cases, these resistances are corrupted with i.i.d. Gaussian noise $\eta \sim \mathcal{N}(0,\sigma^2)$. We experiment with different values of variance $\sigma^2$.

For Problem \ref{prob:effres2} we implemented gradient decent based on the closed-from gradient calculation in Section \ref{sec:full_reconstruct_noise}. Line search was used to optimize step size at each iteration since it significantly outperformed implementations with fixed step sizes. For larger problems, block coordinate descent was used as described in Section \ref{sec:full_reconstruct_noise}, with the coordinate set chosen uniformly at random in each iteration. We set the block size $|\mathcal{B}|=5000$. For Problem \ref{prob:effres3} we used MOSEK convex optimization software, accessed through the CVX interface \cite{mosek,cvx}. All experiments were run on a computer with a 2.6 GHz Intel Core i7 processor and 16 GB of main memory.

\subsection{Learning Synthetic Graphs}
\label{sec:learnsynthetic}

We first evaluate our graph learning algorithms on {\sc Grid} and {\sc k-nn}, which are simple synthetic graphs with clear structure. 

\spara{Least squares formulation.} We first observe that gradient descent effectively minimizes the objective function of Problem \ref{prob:effres2} on the {\sc Grid} and {\sc k-nn} graphs. We consider the normalized objective for a constraint set $\mathcal{S}$ and output graph $\optG$:
\begin{align}\label{normalizedObj}
\wh F(\optG) = \frac{\sum_{(u,v) \in \mathcal{S}} \left [{r}_\optG(u,v)-\bar r(u,v)\right]^2}{\sum_{(u,v) \in \mathcal{S}} \bar r(u,v)^2}.
\end{align}
For noise variance 0, $\min_\optG \wh{F}(\optG) = 0$ and in Figure \ref{gridErrorReduce0noise} we see that for {\sc Grid} we in fact find $H$ with $\wh{F}(H) \approx 0$ for varying sizes of $\mathcal{S}$. Convergence is notably faster when $100\%$ of effective resistances are included in $\mathcal{S}$, but otherwise does not correlate strongly with the number of constraints.

 \begin{figure} 
\begin{tabular}{cc}
        \includegraphics[width=0.45\textwidth]{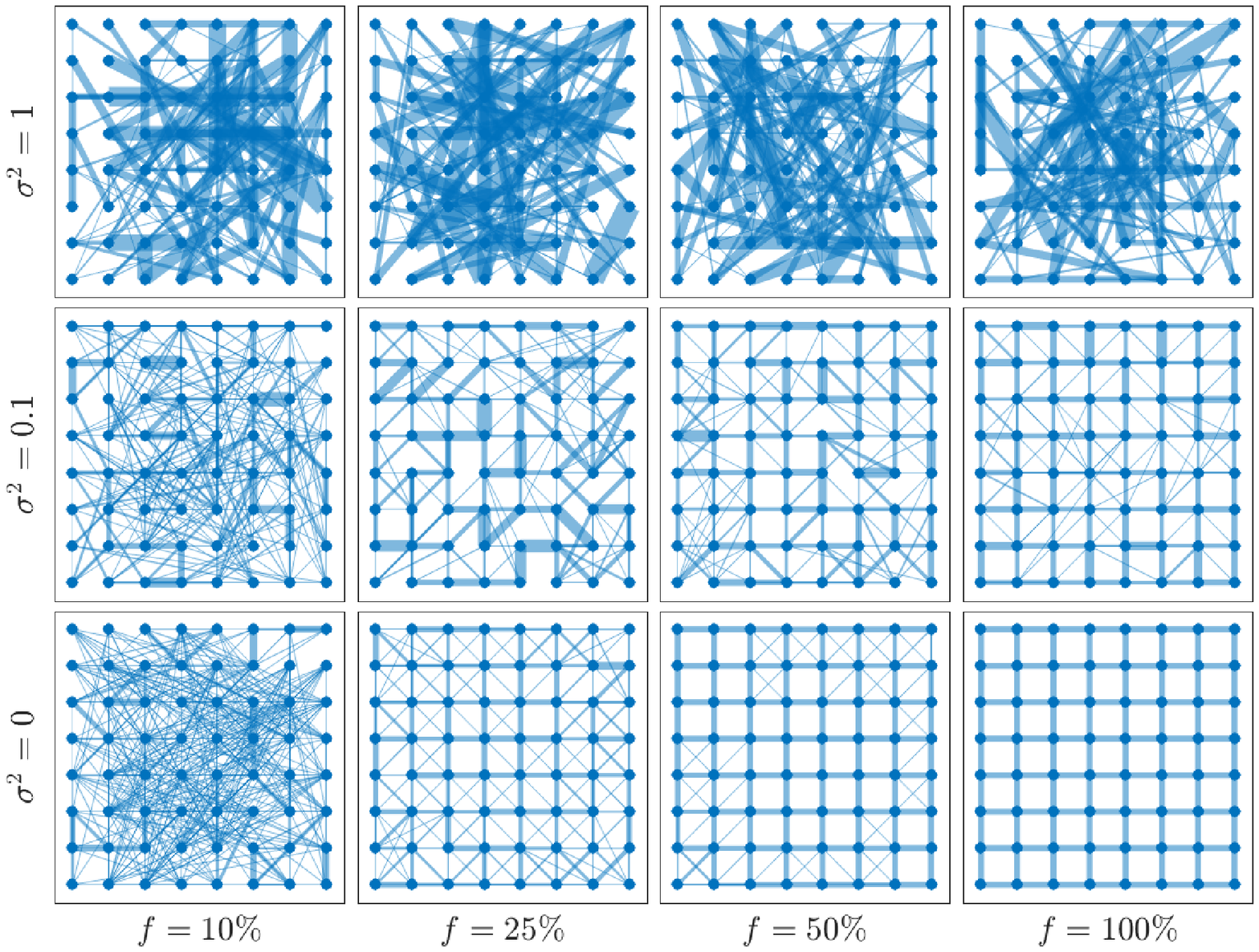} & 
        \includegraphics[width=0.45\textwidth]{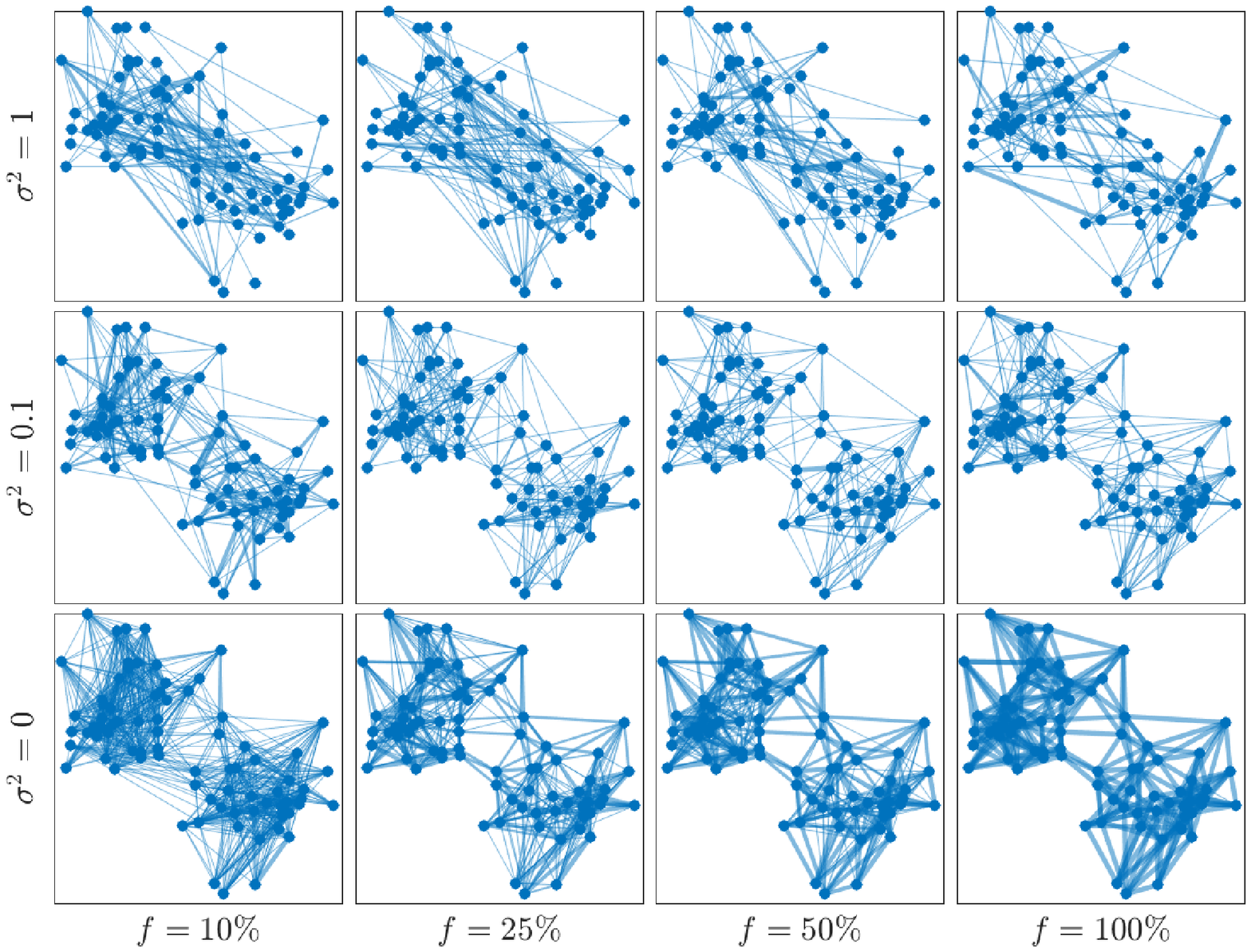} 
\end{tabular}
\caption{\label{GDvisualRecovery} Graphs learned by solving Problem \ref{prob:effres2} with gradient descent run to convergence for uniformly sampled effective resistances with varying levels of Gaussian noise. Edge width is proportional to edge weight in the plots.}
\end{figure}

In Figure \ref{gridErrorReduce0noise} we also plot the \emph{generalization error}:
\begin{align}
\label{generalization_error}
F_{gen}(\optG) = \frac{\sum_{(u,v) \in [n] \times [n]} \left [{r}_\optG(u,v)- r_G(u,v)\right]^2}{\sum_{(u,v) \in [n] \times [n]} {r}_G^2(u,v)},
\end{align}
where $r_G(u,v)$ is the true effective resistance, uncorrupted by noise.
$F_{gen}(\optG)$ measures how well the graph obtained by solving Problem \ref{prob:effres2} matches \emph{all} effective resistances of the original network. We confirm that generalization decreases with improved objective function performance, indicating that optimizing Problem \ref{prob:effres2} effectively extracts network structure from a small set of effective resistances. We observe that the generalization error is small even when $f=10\%$, and becomes negligible as we increase the fraction $f$ of measurements, even in the presence of noise.

\begin{figure*}[!t]
\begin{center}
\begin{tabular}{cc}
        \includegraphics[width=0.4\textwidth]{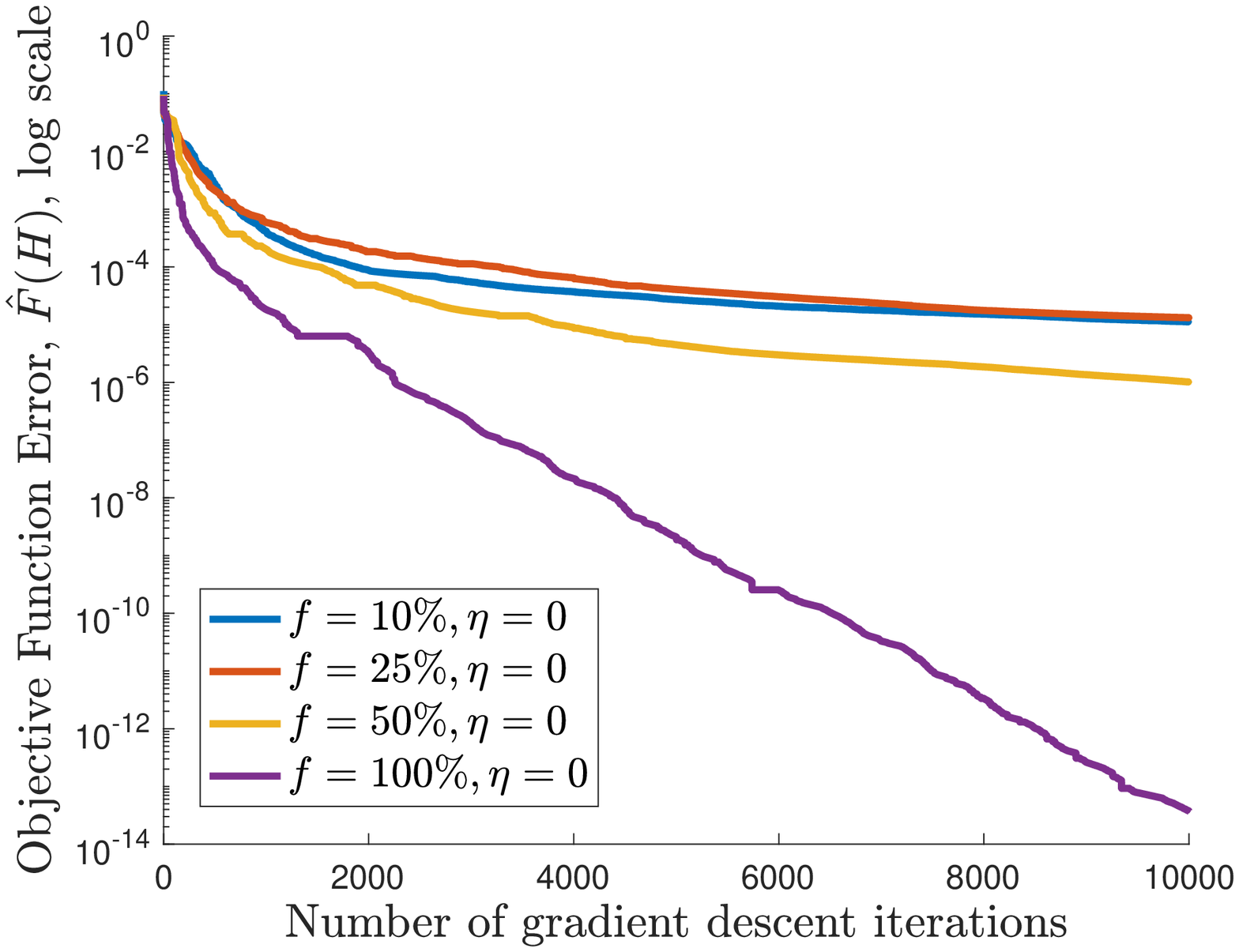} &        \includegraphics[width=0.4\textwidth]{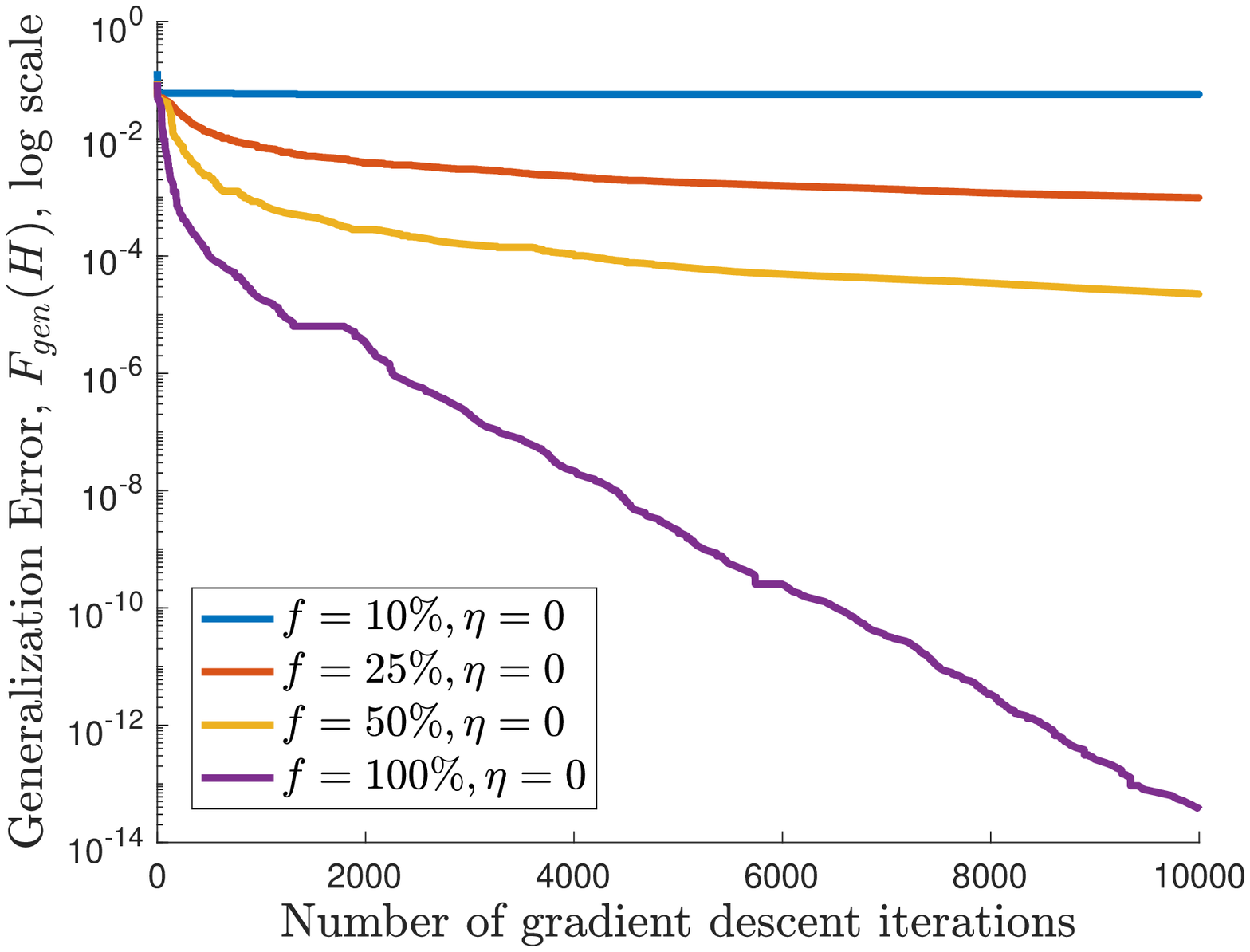} \\ 
                \includegraphics[width=0.4\textwidth]{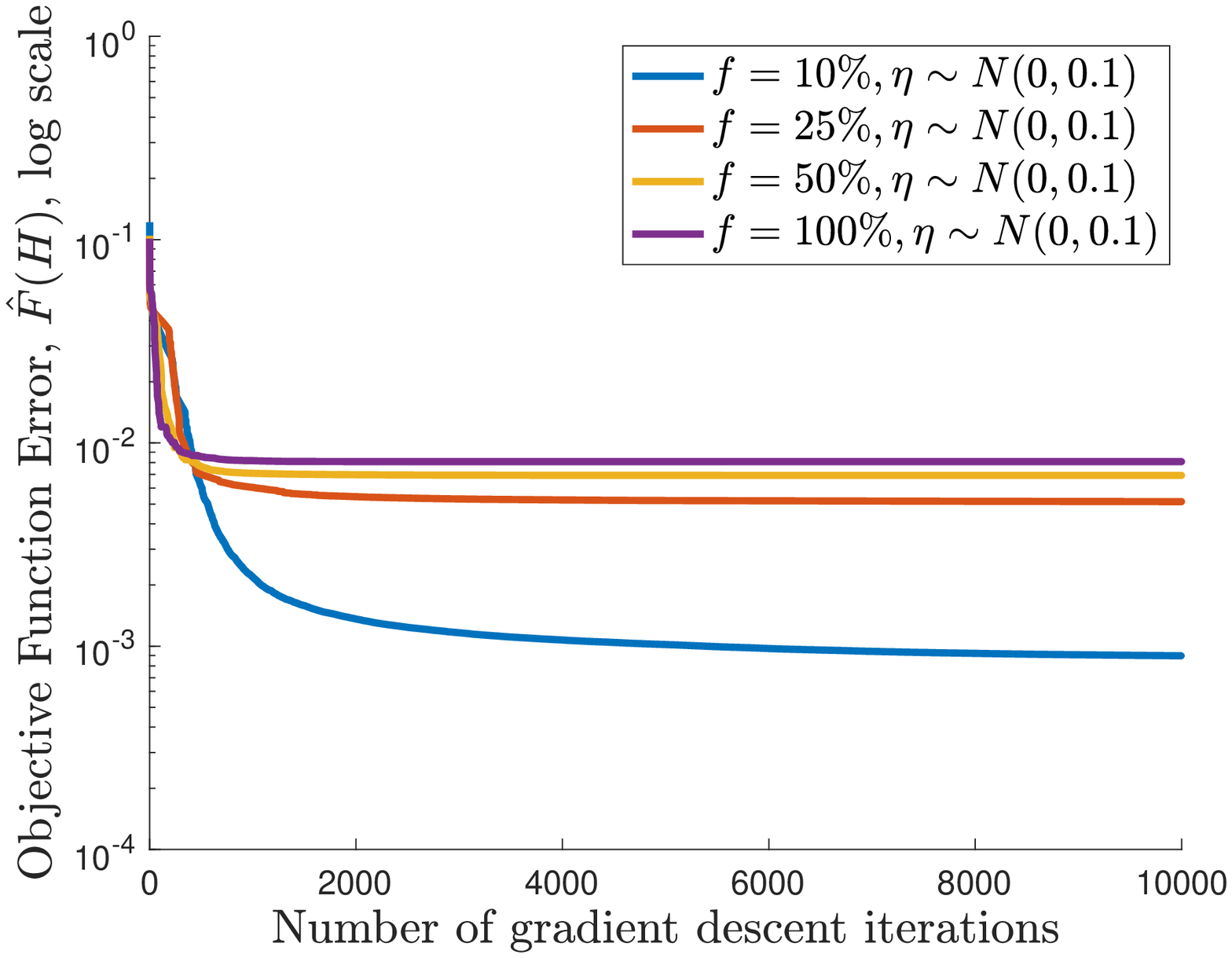} &         \includegraphics[width=0.4\textwidth]{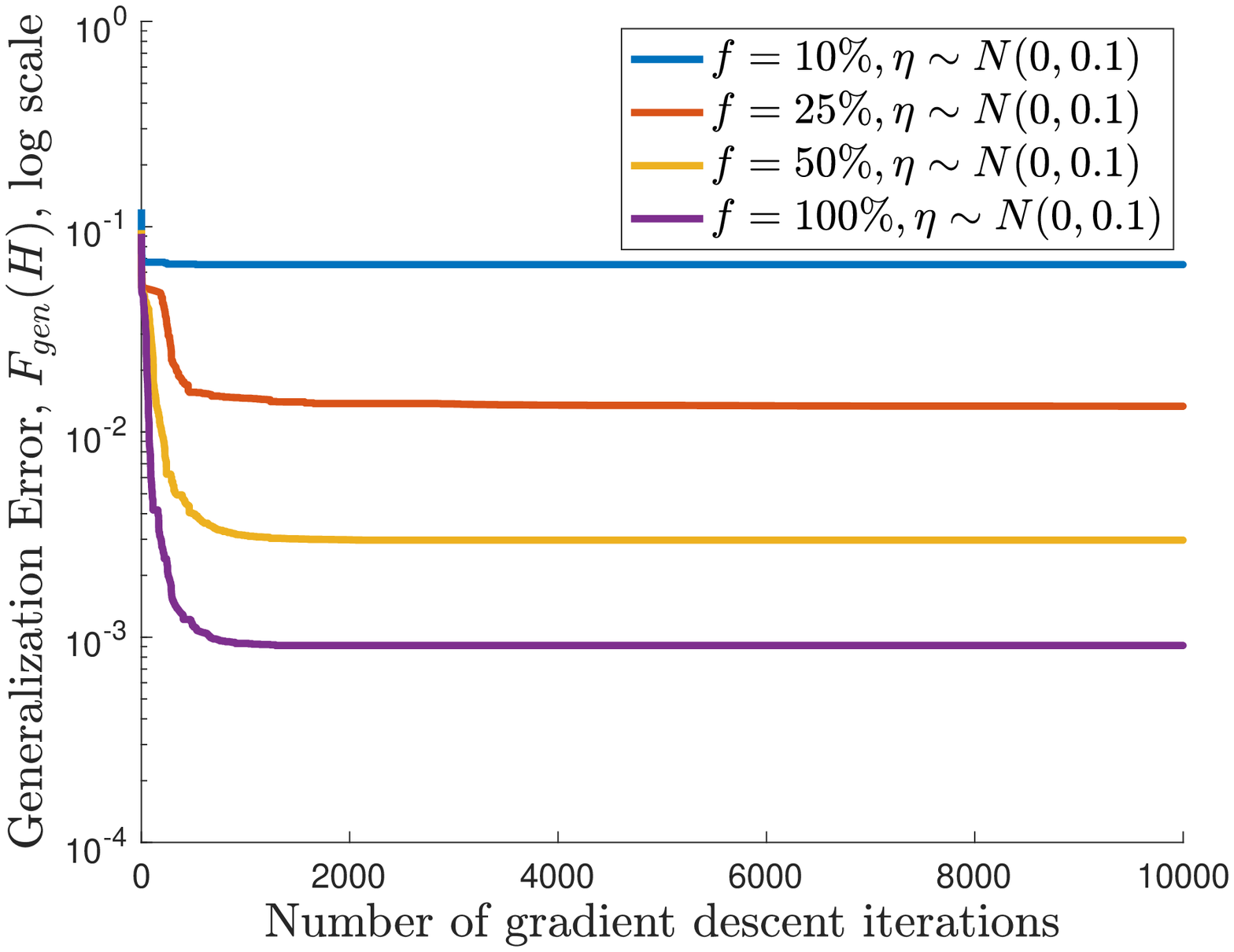} 
\end{tabular}
\end{center}
\caption{  \label{gridErrorReduce0noise}  Objective error and generalization error for Problem \ref{prob:effres2}, as defined in \eqref{generalization_error} for {\sc Grid}. For details, see Section~\ref{sec:learnsynthetic}.}
\end{figure*}

We repeat the same experiments with Gaussian noise added to each resistance measurement. The variance of the noise, $\sigma^2$, is scaled relatively to the mean effective resistance in the graph, i.e., we set $ \bar r(u,v) = r_G(u,v) + \mathcal{N}(0,\bar{\sigma}^2)$ where: 
\begin{align}
\label{variance_scale}
\bar{\sigma}^2 = \frac{\sigma^2}{{n \choose 2}} \cdot \sum_{(u,v) \in [n] \times [n]} {r}_G(u,v).
\end{align}
While generally $\min_\optG \wh{F}(\optG) > 0$ when $\bar {r}(u,v)$ is noisy   (it is likely that there is no graph consistent with these noisy measurements), we see that the objective value still decreases steadily with a larger number of iterations. Generalization error also decreases as desired.

 \begin{figure*}[!t]
 \begin{center}
\begin{tabular}{cc}
        \includegraphics[width=0.4\textwidth]{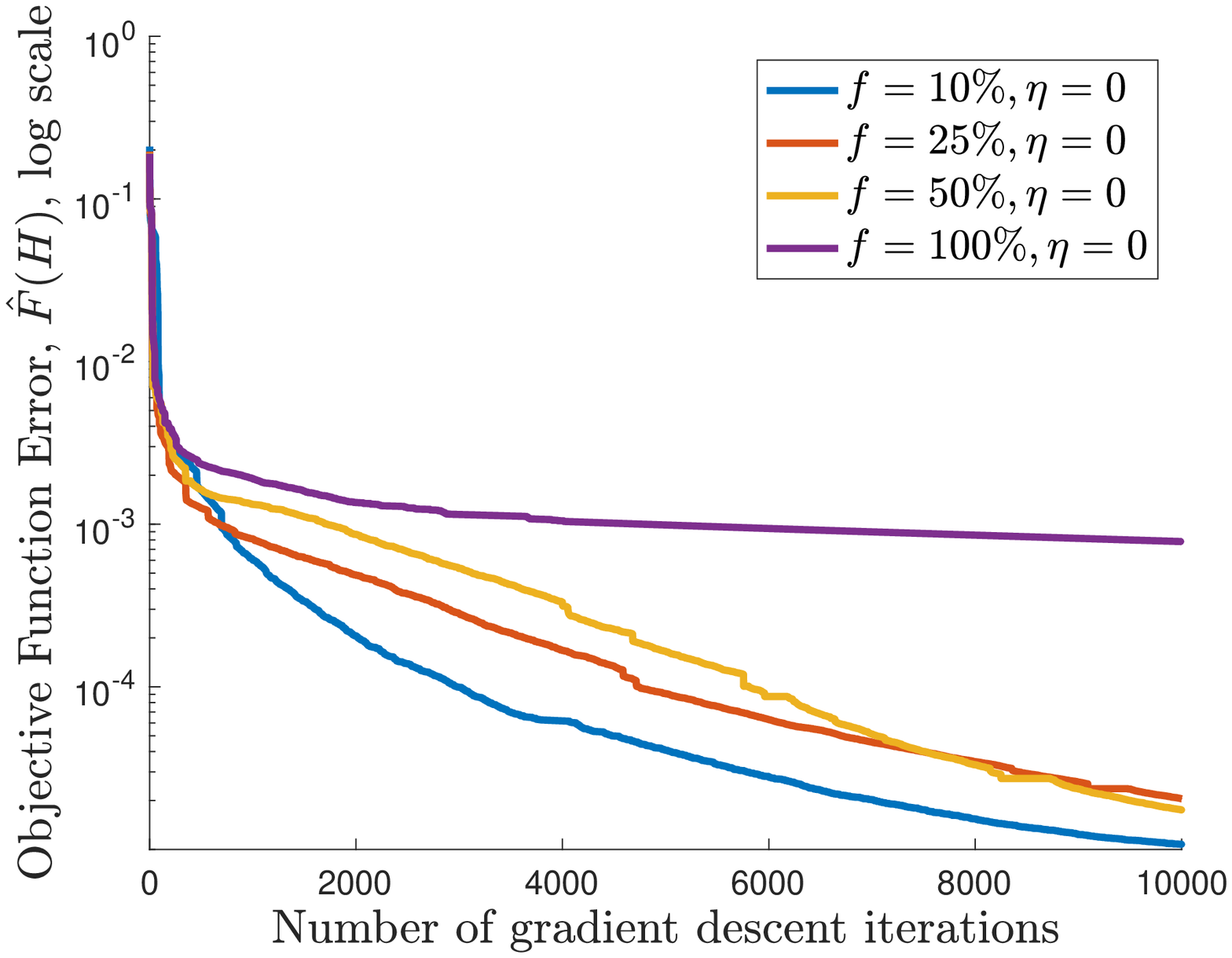} &        \includegraphics[width=0.4\textwidth]{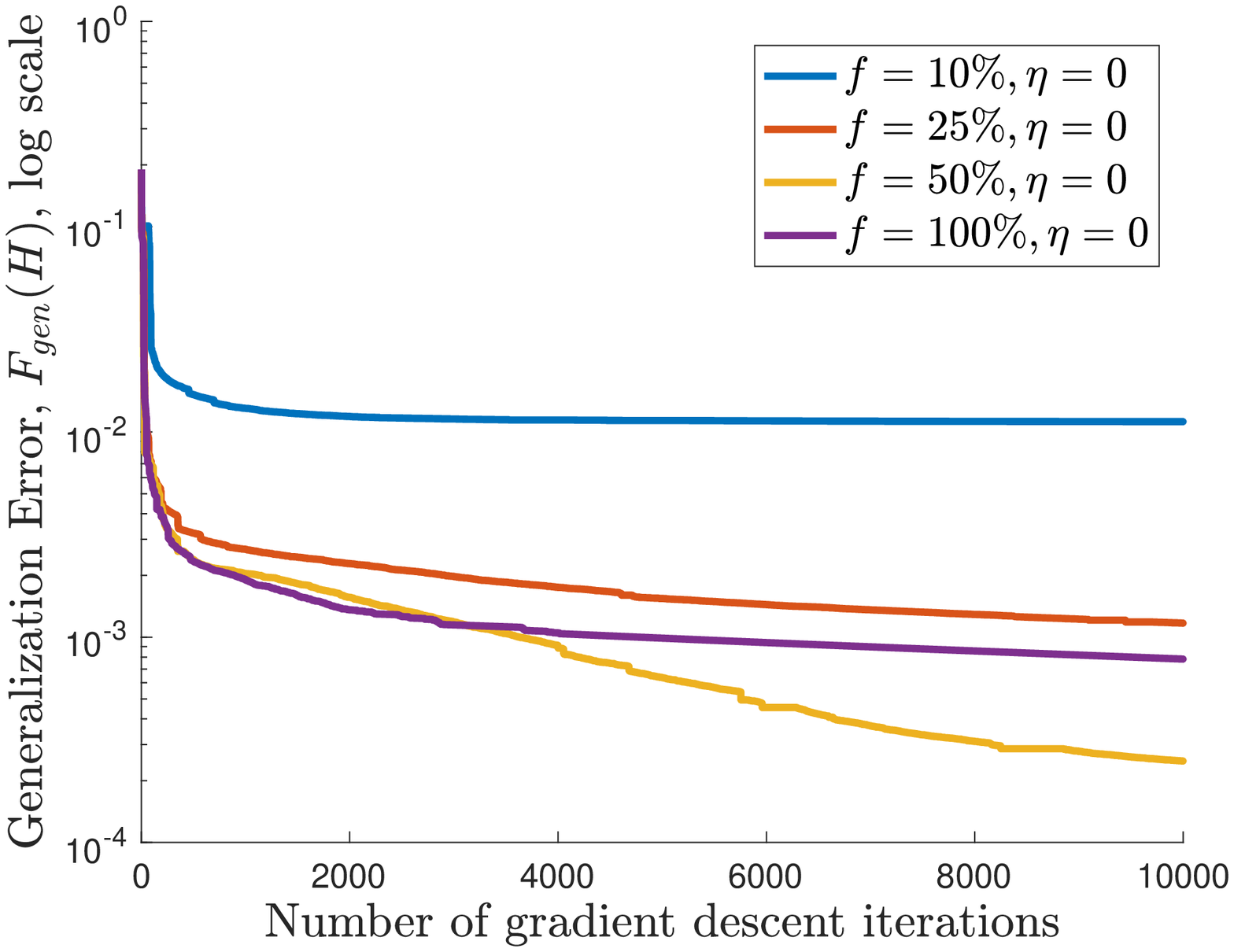} \\ 
                \includegraphics[width=.4\textwidth]{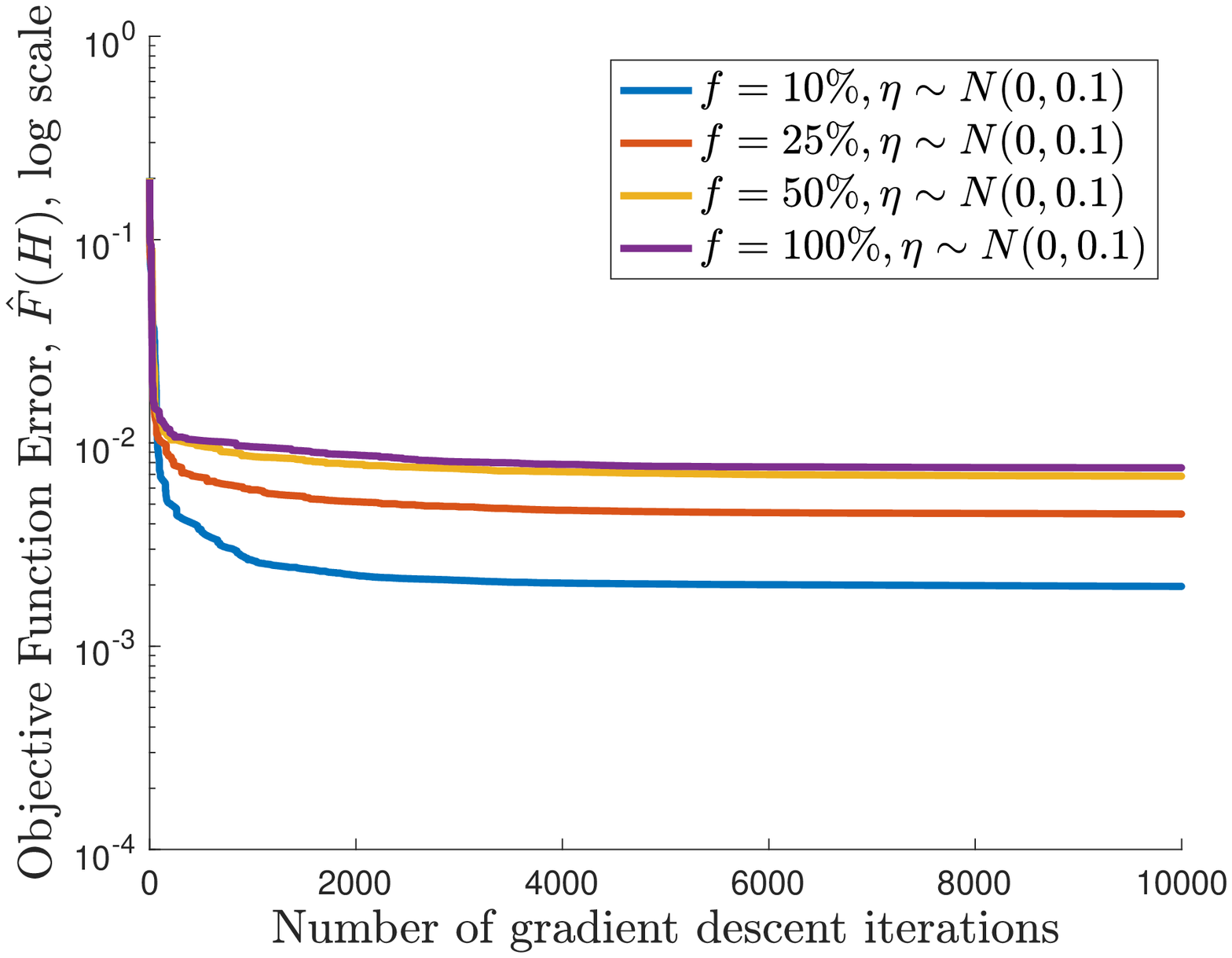} &         \includegraphics[width=0.4\textwidth]{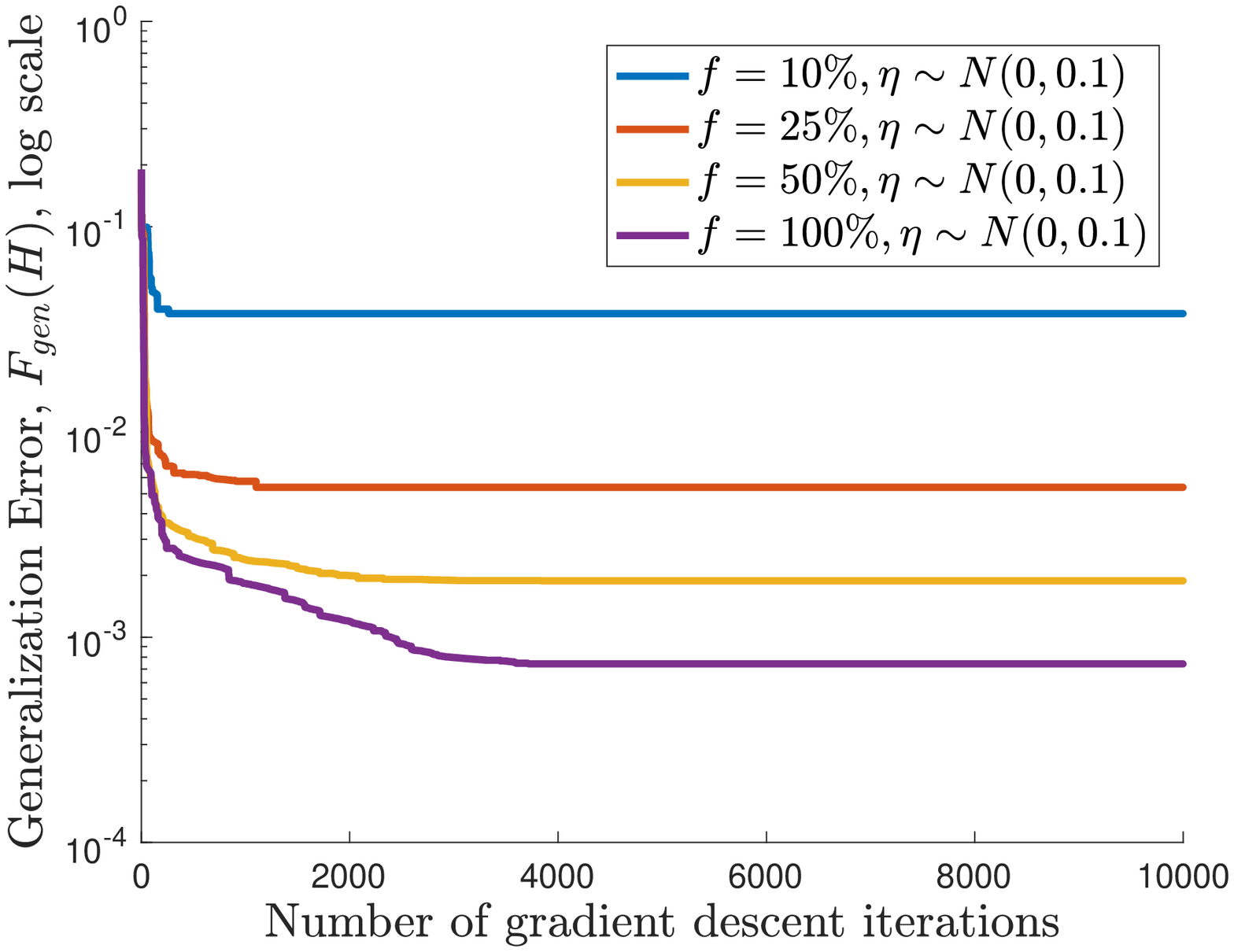}
\end{tabular}
\end{center}
\caption{  \label{clusterErrorReduce} Objective error and generalization error for Problem \ref{prob:effres2}, as defined in \eqref{generalization_error} for {\sc k-nn}. Objective function error decreases steadily, leading to improved generalization error.}
\end{figure*}

We obtain similar results by applying Problem \ref{prob:effres2} to a  $k$-nearest neighbor graph with two clear clusters of data points. Again, it is apparent in Figure \ref{clusterErrorReduce} that gradient descent converges easily for a variety of noise levels and constraint sets. Convergence on the least squares objective leads to improved generalization error. 

Figure \ref{GDvisualRecovery} shows the graphs obtained from solving the problem for varying $\sigma^2$ and $f$. For both graphs, when $\sigma^2 = 0$ and $f=100\%$, the original network is recovered exactly.
Reconstruction accuracy decreases with increasing noise and a decreasing number of constraints. For {\sc Grid}, even with $25\%$ of constraints, nearly full recovery is possible for $\sigma^2 = 0$ and recovery of approximately half of true edges is possible for $\sigma^2 =0.1$. For {\sc k-nn}, for $\sigma^2 = 0$ and $\sigma^2 = 0.1$ we observe that cluster structure is recovered.  Detailed quantitative results for both networks are given in Table \ref{tab:results}.

\begin{table}
\centering
\begin{tabu}{|c|c|[1pt]c|c|} \hline
 \multicolumn{2}{|c|[1pt]}{{\sc Grid}, 64 nodes} &  \multicolumn{2}{c|}{{\sc  k-nn}, 80 nodes} \\\hline
  \# of constraints & Time (min.) & \# of constraints & Time (min.)\\
   \tabucline[1pt]{1-4} 
202 & 238.8 & 316 & 1216.0	 \\
504 &761.6  & 790 &3673.2 	 \\
1008 & 1309.5 & 1580 & 8008.5 \\
2016 & 2183.7 & 3160 & 16192.1\\
\hline
\end{tabu}
\caption{\label{tab:sdptimes}Semidefinite program (SDP) optimization for Problem \ref{prob:effres3}. Runtime is averaged over noise levels $\sigma^2= 0, 0.1, 1$. }
\end{table}
 
\spara{Convex formulation.} We next evaluate the performance graph learning via the convex relaxation in Problem \ref{prob:effres3}. In this case, we do not focus on convergence as we solve the problem directly using a semidefinite programming (SDP) routine. 
Unlike for Problem \ref{prob:effres2}, solving Problem \ref{prob:effres3} does not recover the exact input graph, even in the noiseless all pairs effective resistance case. This is because the input graph does not necessarily minimize the objective of Problem \ref{prob:effres3} since there can be other graphs with smaller total edge weight and lower effective resistances.

\begin{table*}[h]
\begin{footnotesize}
\begin{tabu}{|c|c|c|c|c|c|c|c|c|c|} \hline
\multirow{2}{*}{Network}  & \multirow{2}{*}{Algorithm} &  \multirow{2}{*}{\specialcell{$\sigma^2$}} & \multicolumn{2}{c|}{\specialcell{Objective \\ function error}} & \multicolumn{2}{c|[1pt]}{\specialcell{Effect. resistance \\ generalization  error}} & \multirow{2}{*}{\specialcell{\% Edges \\ learned \\ baseline\vspace{.85em}}} &\multicolumn{2}{c|}{\specialcell{\% Edges \\ learned} } \\ \cline{4-7}\cline{9-10}
& & & $f=10\%$ & $f=25\%$ & $f=10\%$ & $f=25\%$ & & $f=10\%$ & $f=25\%$\\
 \tabucline[1pt]{1-10}
 %\hline
{\sc Grid } & GD & 0 & $.00001$ & $.00001$ &  $.06559$& $.00099$& 5.56 	& 20.54 & 88.39\\
 & GD & .1 & $.00090$ & $.00514$ & $.08129$& $.01336$ &  	& 25.89 & 50.00 \\
 & SDP & 0 & na & na & $.08758$& $.07422$ & 	&16.07 & 25.00 \\
 & SDP & .1 & na & na & $.09549$& $.09343$ & 	&12.50 & 26.79 \\ \hline
{\sc  k-nn} & GD & 0 & $.00001$ & $.00002$ & $.01122$&$.00117$ & 11.58 	& 44.54& 72.68\\
& GD & .1 & $.00197$ & $.00447$ & $.05536$&$.00709$ &  	& 25.96& 41.53\\
 & SDP & 0 & na & na & $.09314$& $.10399$ & 	&27.05 & 48.36 \\ 
 & SDP & .1 & na & na & $.11899$ & $.14097$& 	& 24.32 & 39.89 \\ \hline
{\sc FB Small A} &GD  & 0	& $.01345$& $.00001$ & $.21097$ & $.00984$ & $28.20$&  $44.54$ & $75.00$\\ 
& GD & .1	& $.00017$ & $.00204$& $.07964$& $.01687$  &  & $53.64$ &$60.00$\\ \hline
{\sc FB Small B} & GD  & 0	& $.00002$	& $.00003$ & $.01515$& $.00623 $& $14.59$ &$42.75$ & $48.55$ \\
 & GD  & .1	& $.00032$&$.00206$ &$.02229$&$.01291$	 &&$36.23$ &$43.48$ \\ \hline
{\sc FB Small C}  & GD & 0	& $.00162$&	$.00166$ & $.00217$&$.00203$& $15.55$&$57.03$ & $59.51$ \\
 & GD & .1	& $.00532$&	$.00644$ & $.01542$& $.00218$  &  &$52.66$ & $57.51$\\\hline
{\sc FB Small D} & GD  & 0	& $.00335$& $.00434$& $.00821$&$.00830$	  & $11.80$  &$21.92$ &$24.52$\\
& GD  & .1	& $.00610$& $.18384$& $.00923$&$.21426$	  &  &$21.38$ &$21.20$\\ \hline
{\sc FB Medium A} & GD  & 0	& $.00447$&	$.00665$ &$.02910$&$.01713$&$12.78$	  &$23.50$ & $25.59$  \\  \hline
{\sc FB Medium B} & CD  & 0	& $.00224$&$.01255$&$.00862$& $.01471$	 &$4.80$&$18.97$ & $22.15$\\
& CD  & .1	& $.01174$&$.03182$	&$.01687$ & $.03295$ &&$17.50$& $16.03$\\
 \hline
{\sc FB Large A}  & CD& 0	& $.00516$&	$.00796$ & $.00682$& $.00862$& $3.41$&$10.52$ & $12.45$\\  \hline
{\sc FB Large  B}  & CD & 0	& $.00524$&	$.00440$  & $.00635$& $.00580$&$9.51$&$20.26$ & $24.95$ \\ 
  & CD & .1	& $.12745$&	$.34646$ &$.14532$ & $.36095$ && $19.43$ & $16.97$ \\ \hline  
\end{tabu}
  \caption{\label{tab:results}Graph recovery results. All results use a randomly sampled subset of $f=10\%$ or $f=25 \%$ of all effective resistances.
  For ``Algorithm'', GD denotes gradient descent and CD denotes block coordinate descent with random batches of size $5000$. ``Noise level, $\sigma^2$'' indicates that the target resistances were set to $\bar r(u,v) = r_G(u,v) + \mathcal{N}(0,\sigma^2 \cdot mean_{u,v}(r_G(u,v)))$.  
  ``\% Edges baseline'', is the edge density of the underlying network, equivalent to the expected edge prediction accuracy achieved with random guessing.}
\end{footnotesize}
\end{table*}

However, the learned graphs \emph{do capture} information about edges in the original: their heaviest edges typically align with true edges in the target graph. This property is captured in the quantitative results of Table \ref{tab:results}. Qualitatively, it is very apparent for {\sc Grid}: in  Figure~\ref{SDPvisualRecovery} we mark the 224 heaviest edges in the learned graph in red and note that this set converges exactly on the grid. 

The convex formulation never significantly outperforms the least squares formulation, and significantly underperforms for small constraint sets.
Additionally, the semidefinite program scales poorly with the number of nodes and effective resistance constraints. Sample runtimes are included in Table \ref{tab:sdptimes}.
Due to these considerations, we use the least squares formulation of Problem \ref{prob:effres2} instead of the convex formulation in our experiments on real social networks. However, we believe there is further opportunity for exploring Problem \ref{prob:effres3}, especially given its provable runtime guarantees.  

\begin{figure}[h!]
\centering
\includegraphics[width=0.65 \textwidth]{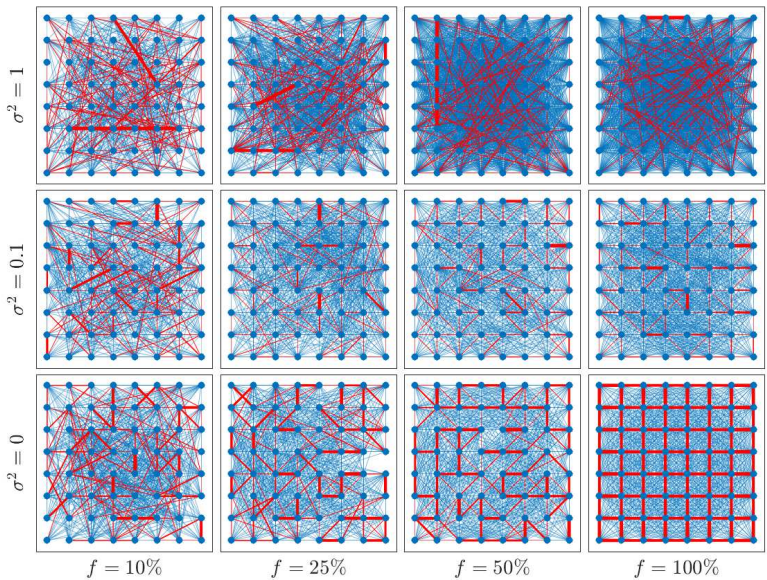}
\caption{\label{SDPvisualRecovery} Graphs learned from solving the convex program in Problem \ref{prob:effres3} for uniformly sampled effective resistances from {\sc Grid} with varying $f,\sigma^2$. Heaviest edges marked in red.}
\end{figure}

\normalsize
\subsection{Learning  Social Network Graphs}

We conclude by demonstrating the effectiveness of the least squares formulation of Problem \ref{prob:effres2} in learning Facebook ego networks from limited effective resistance measurements. We consider three metrics of performance, shown in Table \ref{tab:results} for a number of networks learned from randomly sampled subsets  of effective resistances, corrupting with varying levels of noise.
\begin{enumerate}
\item Objective Function Value: the value of the objective function of Problem \ref{prob:effres2}, normalized as described in \eqref{normalizedObj}. 
\item Generalization Error: the error in reconstructing the full set of effective resistances of the true graph, as defined in \eqref{generalization_error}.
\vspace{-2mm}
\item Edges Learned: the rate of recovery for edges in the true graph. We utilize a standard metric from the link prediction literature \cite{kleinberg1999web}: given underlying graph $G$ with $m$ edges and learned graph $H$, we consider the $m$ heaviest edges of $H$ and compute the percentage of $G$'s edges contained in this set.
\end{enumerate}

\spara{Results.} We find that as for the synthetic {\sc Grid} and {\sc k-nn} graphs, we can effectively minimize the objective function of Problem \ref{prob:effres2} for the Facebook ego networks. Moreover, this minimization leads to very good generalization error in nearly all cases. i.e., we can effectively learn a graph matching our input on all effective resistances, even when we consider just a small subset.  

For all graphs, we are able to recover a significant fraction of edges, even when just considering $10\%$ or $25\%$ of effective resistance pairs. We obtain the best recovery for small graphs, learning over half of the true edges in {\sc FB Small A} and {\sc FB Small C}. Even for larger graphs, we can typically recover over $20\%$ of the true edges.  

Typically, the number of edges learned increases as we increase the number of constraints and decrease the noise variance. However, occasionally,  considering fewer effective resistances in fact improves learning, possibly because we more effectively solve the underlying optimization problem.\vspace{1em}

 \section{Conclusion}
\label{sec:concl}
In this work, we show  that a small subset of noisy effective resistances can be used to learn significant information about a network, including predicting a large percentage of edges and recovering global structure, such as accurate approximations to all pairwise effective resistances. From a privacy standpoint, our results raise major concerns about  releasing random walk-based pairwise node similarity information as it entails significant privacy risk. From a data mining perspective, our methods can be used for graph mining, even when computing all effective resistances exactly is infeasible. 

Our work leaves  a number of future research directions open, including giving a provable polynomial time algorithm for the least squares formulation (Problem \ref{prob:effres2}), extending our work to other similarity metrics, and scaling our methods to larger social networks.

\bibliography{ref}
\bibliographystyle{abbrv}

\end{document}